\documentclass{article}

\usepackage{amsmath,amssymb,amsthm,bbm,natbib,color,nccmath}
\usepackage[margin=1in]{geometry}
\usepackage{caption}
\usepackage{graphicx}
\usepackage{subcaption}
\usepackage{afterpage}
\usepackage[hang,flushmargin]{footmisc}
\usepackage{float}
\usepackage{enumitem,kantlipsum}
\usepackage{multirow}
\usepackage[ruled,vlined]{algorithm2e}
\usepackage[
  colorlinks,
  citecolor=blue,
  linkcolor=black,
  anchorcolor=red,
  urlcolor=blue
]{hyperref}    

\newtheorem{theorem}{Theorem}

\newtheorem{proposition}{Proposition}

\newtheorem{remark}{Remark}

\allowdisplaybreaks

\newcommand{\dt}{\mathcal{D}_{\textnormal{train}}}
\newcommand{\dc}{\mathcal{D}_{\textnormal{cal}}}
\newcommand{\df}{\mathcal{D}_{f}}
\newcommand{\ba}{\bar{\alpha}}
\newcommand{\ta}{\tilde{\alpha}}
\newcommand{\mt}{m_\textnormal{train}}
\newcommand{\R}{\mathbb{R}}

\newcommand{\D}{\mathcal{D}}
\newcommand{\F}{\mathcal{F}}

\newcommand{\EE}[1]{\mathbb{E}\left[{#1}\right]}
\newcommand{\EEst}[2]{\mathbb{E}\left[{#1}\  \middle| \ {#2}\right]}
\newcommand{\Ep}[2]{\mathbb{E}_{{#1}}\left[{#2}\right]}
\newcommand{\Epst}[3]{\mathbb{E}_{{#1}}\left[{#2}\  \middle| \ {#3}\right]}
\newcommand{\PP}[1]{\mathbb{P}\left\{{#1}\right\}}
\newcommand{\PPst}[2]{\mathbb{P}\left\{{#1}\  \middle| \ {#2}\right\}}

\newcommand{\Pp}[2]{\mathbb{P}_{{#1}}\left\{{#2}\right\}}
\newcommand{\One}[1]{{\mathbbm{1}}\left\{{#1}\right\}}
\newcommand{\one}[1]{{\mathbbm{1}}_{{#1}}}
\newcommand{\iidsim}{\stackrel{\textnormal{iid}}{\sim}}

\newcommand{\X}{\mathcal{X}}
\newcommand{\Y}{\mathcal{Y}}
\newcommand{\ch}{\widehat{C}}
\renewcommand{\hat}{\widehat}
\definecolor{ForestGreen}{RGB}{34,139,34}

\usepackage{authblk}

\allowdisplaybreaks

%% comments

\title{Conditional predictive inference with $L^k$-coverage control}
\author{Yonghoon Lee}
\author{Zhimei Ren}
\affil{Department of Statistics and Data Science, the Wharton School,\\ University of Pennsylvania}

\date{\today}

\begin{document}

\maketitle

\begin{abstract}
We consider the problem of distribution-free conditional predictive inference. Prior work has established that achieving exact finite-sample control of conditional coverage without distributional assumptions is impossible, in the sense that it necessarily results in trivial prediction sets. While several lines of work have proposed methods targeting relaxed notions of conditional coverage guarantee, the inherent difficulty of the problem typically leads such methods to offer only approximate guarantees or yield less direct interpretations, even with the relaxations. In this work, we propose an inferential target as a relaxed version of conditional predictive inference that is achievable with exact distribution-free finite sample control, while also offering intuitive interpretations. One of the key ideas, though simple, is to view conditional coverage as a function rather than a scalar, and thereby aim to control its function norm. We propose a procedure that controls the $L^k$-norm---while primarily focusing on the $L^2$-norm---of a relaxed notion of conditional coverage, adapting to different approaches depending on the choice of hyperparameter (e.g., local-conditional coverage, smoothed conditional coverage, or conditional coverage for a perturbed sample). We illustrate the performance of our procedure as a tool for conditional predictive inference, through simulations and experiments on a real dataset.

\end{abstract}

%\tableofcontents

\section{Introduction}

Consider a standard predictive inference problem where we are given i.i.d.~data $(X_1,Y_1), \dots, (X_n,Y_n) \in \X \times \Y$. 
The task is to construct, for a new feature input $X_{n+1} \in \X$, 
a prediction set $\ch(X_{n+1})$ for the unobserved corresponding outcome $Y_{n+1}$. 
In distribution-free predictive inference, the goal is to build a prediction set 
whose validity does not rely on the underlying data distribution. 
One of the most widely adopted approaches in this setting is 
conformal prediction~\citep{vovk2005algorithmic}, which guarantees the \textit{marginal coverage}:
\[\Pp{(X_1,Y_1),\cdots,(X_n,Y_n),(X_{n+1},Y_{n+1}) \iidsim P}{Y_{n+1} \in \ch(X_{n+1})} \geq 1 - \alpha, \quad \text{ for any distribution $P$},\]
where $\alpha \in (0,1)$ is a user-specified level. 
The usefulness of the marginal coverage guarantee, however, is often limited, 
since the $(1-\alpha)$ bound pertains to a probability taken not only with respect to the distribution of $Y_{n+1}$, but also with respect to that of $X_{n+1}$ (as well as the training data $\{(X_i,Y_i)\}_{i=1}^n$). 
For instance, in medical prognosis, 
conformal prediction might fail to provide high-quality information for a specific patient,
since its guarantee is averaged over 
%Instead, it controls the quality over 
a population of hypothetical patients. In conformal inference, the quality of individual-level predictions 
often relies more on the initial point estimate---which can be incorporated into conformal prediction through the nonconformity score---and when this estimate is poor, conformal prediction may fail to provide reliable inference for individuals.

In this context, an important question in the realm of distribution-free predictive inference is how to construct a prediction set with high ``conditional quality". One approach toward this goal is to construct accurate estimators or nonconformity scores to be used within the conformal prediction framework. A representative example is conformalized quantile regression~\citep{romano2019conformalized}, which incorporates estimates of conditional quantiles into conformal prediction. Another line of work has aimed to guarantee conditional quality at the inferential level by directly controlling the conditional coverage rate, rather than relying on the quality of point estimates. An ideal target to consider is the following strong conditional coverage guarantee:
\[\PPst{Y_{n+1} \in \ch(X_{n+1})}{X_{n+1} = x} \geq 1-\alpha, \quad \text{ for all } x \in \mathcal{X}.\]
However, it has been shown that this notion of conditional coverage control is generally unattainable in the distribution-free setting, in the sense that any distribution-free method produces a prediction set of infinite measure whenever the feature distribution is continuous~\citep{vovk2005algorithmic}. Consequently, much recent literature has aimed to achieve a relaxed or asymptotic guarantee, often with additional assumptions.

In this work, we propose a target guarantee that is attainable in an exact, finite-sample, distribution-free sense, while providing conditional coverage control to a meaningful extent. Our target encompasses multiple perspectives, such as a relaxation of the multi-accuracy condition, control of the local-conditional miscoverage rate, and control of the conditional miscoverage rate at a perturbed sample. We develop a method that achieves the proposed target guarantee and demonstrate through empirical results that it effectively controls the conditional miscoverage rates.

\subsection{Notations}
We write $\R$ to denote the real space and $\R^+$ to denote the set of nonnegative real numbers. For a positive integer $n$, $[n]$ denotes the set $\{1,2,\cdots,n\}$. For a measure space 
$(E,\mathcal{E},\mu)$, we define the $L^k$-norm of a measurable function $f$ as
$\|f\|_k := (\int_E |f|^k d\mu)^{1/k}$, and the $L^\infty$-norm of $f$ as
$\|f\|_{\infty} := \inf\{\lambda: |f|\le \lambda \text{ a.e.}\}$.

\subsection{Problem setup}
%We consider a standard setting of predictive inference, where 
Suppose we have access to training data $\dt = (X_i',Y_i')_{i \in [n_\text{train}]}$ and calibration data $\dc = (X_i,Y_i)_{i \in [n]}$, drawn i.i.d. from distribution $P_{X,Y} = P_X \times P_{Y  \mid X}$ on $\X \times \Y$. Our objective is to construct a prediction set $\ch(X_{n+1})$ for the unobserved outcome $Y_{n+1}$ upon receiving a new feature input $X_{n+1}$. 

Here, the training data $\dt$ is used to construct a point estimator, e.g., $\hat{\mu}(X_{n+1})$ for $\EEst{Y_{n+1}}{X_{n+1}}$, while the calibration data $\dc$ is then used to construct a prediction set, e.g., of the form $\hat{\mu}(X_{n+1}) \pm c$ for $Y_{n+1}$. Other choices for the form of prediction sets (or, equivalently, the conformity score) are also possible. %For simplicity, we focus on this form, though our proposed method does not depend on the particular choice.

The goal is to achieve a useful conditional coverage guarantee in a distribution-free manner, ensuring that the prediction set has a small conditional miscoverage rate for most values of $X_{n+1}$. 
In our framework, we consider an even stronger notion of the conditional miscoverage rate, given by  
\begin{equation}\label{eqn:cond_misc}
    \alpha_\D(x) = \PPst{Y_{n+1} \notin \ch(X_{n+1})}{X_{n+1} = x, \dt, \dc},
\end{equation}
which conditions on both the {\em test input} and the {\em labeled data $\D := (\dt,\dc)$---i.e., on all the observations we have.} In the above probability, the randomness arises only from the unobserved outcome $Y_{n+1}$, since $\ch(X_{n+1})$ is fully determined by $\dt$, $\dc$, and $X_{n+1}$.
Controlling $\alpha_\D(X_{n+1}) \le \alpha$ almost surely corresponds to achieving both 
training-conditional and test-conditional coverage simultaneously. 
In what follows, we introduce relaxed versions 
of this target, and develop methods that rigorously control them, thereby paving 
the way toward training- and test-conditional coverage guarantees in a practical sense.

\subsection{Overview of results and contributions}

A brief overview of our results and contributions is as follows:\\

\noindent\textbf{Unified formulation for conditional predictive inference.}  
Because of fundamental limitations in distribution-free conditional predictive inference, various approaches have been proposed to target relaxed forms of inference. We provide a unified framework for these approaches through the lens of approximate multiaccuracy. In particular, common targets of conditional predictive inference can be interpreted as controlling function-weighted conditional miscoverage rates for a class of functions.\\ 

%\zr{Should the following two paragraphs in the reversed order? I.e., the inferential target is introduced first, and then the method for controlling it. Suggesting to focus more on training-conditional inference and the tail bound aspect. }

\noindent\textbf{Distribution-free inference via probabilistic multiaccuracy with $L^k$-coverage guarantee.}  
We introduce the viewpoint that controlling the conditional miscoverage rate should be seen as controlling a function rather than a single value---making the problem fundamentally different from standard marginal coverage control. Based on this perspective, we propose a method that controls the $L^k$-norm of the function-weighted conditional miscoverage over a predefined class $\F$. This $L^k$-norm control can be interpreted as a form of ``probabilistic multiaccuracy,” ensuring that conditional miscoverage is small for most functions in $\F$. Our procedure achieves exact, finite-sample, distribution-free control at a chosen level $\alpha$, with computational simplicity.\\  

\noindent\textbf{Control of function-weighted conditional miscoverage for broad, interpretable function classes.}  
Our procedure controls function-weighted conditional miscoverage rates for broad function classes $\F$. Our construction only requires $\F$ to be measurable, which allows a wide range of choices while maintaining clear interpretability. For instance, our method can provide control of local conditional coverage, smoothed conditional coverage, conditional coverage for perturbed-sample, and more---depending on the choice of $\F$.\\ 

\noindent\textbf{Empirical evaluation.}  
We empirically assess the proposed procedure through simulations and an application to the Abalone dataset~\citep{nash1994population}.\footnote{Code to reproduce the experiments is available at \url{https://github.com/yhoon31/L_k_conditional_inference}.} The results demonstrate that our method attains tight control of function-weighted conditional miscoverage---which is the theoretical target---while also providing a strong control of feature-conditional miscoverage rate. Additionally, the empirical results show that the proposed procedure does not become unnecessarily conservative to achieve the strong target of conditional miscoverage control---e.g., in `easy' scenarios where standard conformal prediction with marginal coverage already provides good conditional control, the proposed procedure yields a prediction set similar to---i.e. only as conservative as---the conformal prediction set.

\subsection{Related work}

Distribution-free inference has attracted substantial attention in recent literature. Comprehensive overviews of this field are provided by~\citet{vovk2005algorithmic, angelopoulos2021gentle, shafer2008tutorial}, and~\citet{angelopoulos2024theoretical}. A canonical example is conformal prediction~\citep{vovk2005algorithmic, papadopoulos2002inductive}, which offers a general framework for 
predictive inference that achieves marginal coverage guarantees without distributional assumptions. 

Many recent works have attempted to develop methods that go beyond the marginal coverage guarantee---particularly by aiming for test-conditional inference with stronger guarantees, in order to achieve high-quality inference for specific inputs. However, it has been shown that there are fundamental limits to conditional predictive inference in the distribution-free setting. \citet{vovk2005algorithmic} show that strict control of the conditional coverage rate is impossible under a continuous feature distribution, in the sense that any method with distribution-free validity must produce a trivial prediction set. \citet{foygel2021limits} demonstrate the inherent difficulty of conditional predictive inference, even under relaxed inferential targets. \citet{barber2020distribution}, \citet{lee2021distribution}, and \citet{medarametla2021distribution} prove similar impossibility results for the problem of inference on the conditional mean or median.

To address these challenges, various approaches have been proposed. For instance, \citet{romano2019conformalized, chernozhukov2021distributional, guan2023localized},
and~\citet{xie2024boosted}
refine the conformal prediction procedure in ways that can improve conditional performance (empirically), 
while still targeting marginal coverage. 
Other lines of work seek to improve conditional coverage control by pursuing relaxed versions of strict conditional coverage guarantees. 
For example,~\citet{gibbs2025conformal} re-formulate the conditional coverage guarantee as
the multi-accuracy condition, and uses the first-order condition of quantile regression 
to construct prediction sets with approximate local coverage.~\citet{hore2023conformal} 
propose a method that aims for relaxed local coverage guarantees by means of constructing 
prediction set for a randomly perturbed sample from the test point.~\citet{zhang2024posterior}
propose a method that achieves approximate training- and test-conditional coverage guarantee
with the additional assumption that the conformity scores follow a mixture distribution.

On the other hand, a number of works have pursued training-conditional coverage in 
the distribution-free setting
(e.g.,~\citet{vovk2012conditional,bates2021distribution,bian2023training,liang2023algorithmic,gibbs2025characterizing}), but
the guarantees therein are otherwise averaged over the test points.

\section{Inferential target}

We first provide a unified formulation of the problem of distribution-free test-conditional predictive inference, which encompasses different existing approaches aiming at relaxed targets. Based on this formulation, we review existing approaches and then introduce the inferential target we explore in this work.

\subsection{Background: multiaccuracy}\label{sec:multiaccuracy}

To isolate the roles of training-conditional and test-conditional inference, 
we temporarily focus on a fixed function $\ch : \X \rightarrow \Y$ that generates prediction sets,
so that any remaining challenges are solely from ensuring test-conditional coverage.

The following two statements are equivalent:\footnote{The proof of the equivalence is provided in Appendix~\ref{sec:multi_acc}.}
\begin{equation}\label{eqn:multi_acc_ineq}
\begin{split}
    &\text{(i)}\; \alpha(X_{n+1}) := \PPst{Y_{n+1} \notin \ch(X_{n+1})}{X_{n+1}} \leq \alpha, \textnormal{ almost surely}, \\
    &\text{(ii)}\; \EE{f(X_{n+1})\left(\alpha(X_{n+1}) - \alpha\right)} \leq 0, \textnormal{ for all measurable $f$ with $f \geq 0$}.
\end{split}
\end{equation}
Conditions of the form (ii) are often referred to as \textit{multiaccuracy} or \textit{multi-calibration} and has been widely studied in the context of algorithmic fairness~\citep{pmlr-v80-hebert-johnson18a,kim2019multiaccuracy}.
However, constructing a prediction set with the above strong conditional coverage guarantee in a distribution-free manner is known to be impossible~\citep{vovk2005algorithmic}---in the sense that any procedure achieving this guarantee for any distribution produces an infinite-width prediction set. Thus, we consider a plausible relaxation of the above guarantee. First, observe that condition (ii) can be rewritten as
\[\ba(f) := \EE{\frac{f(X_{n+1})}{\EE{f(X)}} \cdot \alpha(X_{n+1})} \leq \alpha, \textnormal{ for all measurable $f$ with $f \geq 0$},\]
with the convention $0/0 = 0$.
%\zr{Need to deal with the corner case where $f \equiv 0$.}
A natural relaxation of the above condition is to require the inequality to hold for functions in a space $\F$ instead of all measurable functions:
\begin{equation}\label{eqn:target_general}
    \ba(f) \leq \alpha, \textnormal{ for all $f \in \F$}.
\end{equation}
This type of approach was previously studied in~\citet{gibbs2025conformal},\footnote{In~\citet{gibbs2025conformal}, the authors consider the equality-based condition $\EE{f(X_{n+1}) \cdot (\alpha(X_{n+1})-\alpha)} = 0, \textnormal{ for all $f \in \F$}$.} where they provide methods for $\F$ being the space of linear functions, as well as some approximate guarantee results for reproducing kernel Hilbert spaces and Lipschitz functions.

In fact, it turns out that most inferential targets considered in the literature on conditional predictive inference follow the form of~\eqref{eqn:target_general}. Below, we provide a brief overview of conditional coverage guarantees studied in prior work on distribution-free inference, rephrased in our notation for clarity. 

\paragraph{Strong control of test-conditional coverage.} A natural target for inference with conditional validity is the following condition:
\[\PPst{Y_{n+1} \in \ch(X_{n+1})}{X_{n+1} = x} \geq 1-\alpha, \quad \text{ for almost all } x \in \mathcal{X}.\]
This can equivalently be written as $\alpha(x) := \PPst{Y_{n+1} \notin \ch(X_{n+1})}{X_{n+1}=x} \leq \alpha$ for almost all $x \in \mathcal{X}$.
%, or $\|\alpha(X_{n+1})\|_\infty \leq \alpha$. 
From the perspective of multiaccuracy, this guarantee can also be viewed as condition~\eqref{eqn:target_general} with the space of delta functions $\F = \{\delta_z : z \in \X\}$---where $\delta_z$ satisfies $\EE{h(X) \delta_z(X)} = h(z)$ for any integrable function $h$---by viewing $\alpha(x) = \EE{\delta_{x}(X_{n+1})\cdot\alpha(X_{n+1})} = \ba(\delta_x)$.
%\zr{It is perhaps okay to define the delta functions: $\delta_z(x) = +\infty$ if $x=z$ and $0$ otherwise, such that $\int_{-\infty}^\infty \delta_z(x) = 1$.}

As mentioned earlier, it has been shown that this guarantee is unachievable in a distribution-free manner, in the sense that any prediction set satisfying this condition must have infinite measure. Consequently, several works have explored the possibility of achieving inference under a relaxed version of this target. 

\paragraph{Approximate conditional coverage guarantee.}
\citet{foygel2021limits} considers the following approximate conditional coverage guarantee (under the setting $\X = \R^d$):
\[\PPst{Y_{n+1} \in \ch(X_{n+1})}{X_{n+1} \in B} \geq 1 - \alpha \quad \text{for all } B \subset \R^d \text{ with } P_X(B) \geq \delta,\]
which controls the subset-conditional coverage with any subset of the feature space with probability mass at least $\delta$. %Based on the discussion in Section~\ref{sec:local}, 
We can express this condition as
\begin{equation}\label{eqn:approx_CC}
    \ba(f) \leq \alpha \quad \text{for all } f \in \F, \quad \text{where } \F = \{\one{B} : B \subset \R^d, P_X(B) \geq \delta\}.
\end{equation}
To see this, observe that for any set $B \subset \X$ with $P_X(X)\ge \delta$, we have
\begin{equation}\label{eqn:alpha_f_ind}
\begin{split}
    &\ba(\one{B}) = \EE{\frac{\one{B}(X_{n+1})}{\EE{\one{B}(X_{n+1})}} \cdot \alpha(X_{n+1})} = \frac{\EE{\One{X_{n+1} \in B}\cdot \alpha(X_{n+1})}}{\PP{X_{n+1} \in B}}\\
    &= \frac{\EE{\EEst{\One{X_{n+1} \in B} \cdot \One{Y_{n+1} \notin C(X_{n+1})}}{X_{n+1}}}}{\PP{X_{n+1} \in B}} = \frac{\PP{X_{n+1} \in B, Y_{n+1} \notin C(X_{n+1})}}{\PP{X_{n+1} \in B}}\\
    &= \PPst{Y_{n+1} \notin C(X_{n+1})}{X_{n+1} \in B}.
\end{split}
\end{equation}
%for any fixed set $B \subset \X$.
%\zr{I moved the derivation here, since it implies the next one.}
However, \citet{foygel2021limits} shows that even for this relaxed target, a meaningful distribution-free inference is impossible, i.e., any prediction set with this guarantee must have measure at least as large as that from a trivial method---which is conservative.

\paragraph{Binning-based approach}

A further relaxation of the subset-conditional target---attainable in a distribution-free manner---is to partition the feature space into finitely many bins 
and require coverage within the bins.
Specifically, let $\{\X_1, \X_2, \ldots, \X_K\}$ be a fixed partition of the feature space $\X$. By applying conformal prediction within each bin and combining the resulting score bounds, we obtain a prediction set with the following guarantee:
%\zr{It is perhaps important that the number of bins is finite here. Otherwise this is close to the previous case. Also need to rule out the zero-measure case.}
\begin{equation}\label{eqn:coverage_bin}
    \PPst{Y_{n+1} \in \ch(X_{n+1})}{X_{n+1} \in \X_k} \geq 1-\alpha\qquad\text{ for all } k \in [K] \text{ with } \PP{X_{n+1} \in \X_k} > 0.
\end{equation}
See, e.g.,~\citet{angelopoulos2024theoretical} for details. Note that, compared to the condition~\eqref{eqn:approx_CC}, this is a much weaker target, as it requires control over only a finite number of subsets. The guarantee in~\eqref{eqn:coverage_bin} can be equivalently expressed as follows.
\[\ba(f) \leq \alpha \quad \text{for all } f \in \F, \quad \text{where } \F = \{\one{\X_k} : k \in [K]\}.\]
This approach requires each bin to contain a sufficient number of observations to 
yield informative prediction sets.
Consequently, with a small or moderate total sample size, 
a fine partition may not lead to useful inference.

\paragraph{Coverage conditional on a perturbed sample.}
\citet{hore2023conformal} propose a method that begins by drawing a perturbed sample $\tilde{X}_{n+1}$ around $X_{n+1}$. %\zr{I removed ``similarly to the approach in Section 2.3.3.'' since the readers may have trouble relating to the method they have not seen.}
%—similarly to the approach in Section~\ref{sec:kernel_2}. 
Their procedure achieves
\begin{equation}\label{eqn:randomized}
\EEst{\alpha(X_{n+1},\tilde{X}_{n+1})}{\tilde{X}_{n+1}} \leq \alpha \qquad \text{ almost surely},
\end{equation}
where $\alpha(X_{n+1}, \tilde{X}_{n+1})$ is defined analogously to $\alpha(X_{n+1})$, but with additional conditioning on $\tilde{X}_{n+1}$. Following the discussion in Section~\ref{sec:kernel_2}, this condition can be expressed as
\begin{equation}\label{eqn:randomized_2}
\ba(f_{\tilde{X}_{n+1}}) \leq \alpha \quad \text{almost surely},
\end{equation}
where $f_{\tilde{X}_{n+1}}$ denotes the kernel function centered at $\tilde{X}_{n+1}$. 
%\zr{Wait I thought the kernel function is centered at $\tilde X_{n+1}$}
This bound controls $\ba(f)$ for a single function $f_{\tilde{X}_{n+1}}$, but it can still be useful since this function is randomized and incorporates the information from $X_{n+1}$.
%\zr{Here, the function class is random. Does it differ from our setup?}
Yet, the condition~\eqref{eqn:randomized} suffers from a lack of straightforward interpretation as test-conditional inference. Consequently, in~\citet{hore2023conformal}, the authors take additional steps to achieve coverage conditional on $X_{n+1}$ rather than $\tilde{X}_{n+1}$, and derive approximate coverage results for local-conditional coverage. 
\citet{zhang2024posterior} adopt a similar strategy to  
construct prediction sets that are valid conditional on a perturbed version of the test sample;
this guarantee is then connected with the strict conditional coverage 
under the additional assumption that the nonconformity scores 
are from a mixture model.

\paragraph{Direct multiaccuracy-inspired approach.} \citet{gibbs2025conformal} introduces methods 
directly motivated by the condition
\[\EE{f(X_{n+1})(\mathbbm{1}\{Y_{n+1} \in \ch(X_{n+1})\} - (1-\alpha))} = 0 \quad \text{for all } f \in \F,\]
where $\F$ is a prespecified collection of functions. This condition is equivalent to requiring $\ba(f) = \alpha$ for all $f \in \F$. 
Their proposed method attains the marginal coverage guarantee; 
furthermore, when $\F$ is a finite class of linear functions over $d$-dimensional basis functions, 
they show that 
\[
\EE{f(X_{n+1})(\mathbbm{1}\{Y_{n+1} \in \ch(X_{n+1})\} - (1-\alpha))} \ge 0, \text{ for all } f\in \F.
%\leq \frac{d}{n+1} \EE{\max_{i \in [n+1]} f(X_i)}
\]
%for each $f \in \F$, in the setting where . 
This can be equivalently written as  
\begin{equation}\label{eqn:gibbs}
%|\ba(f) - \alpha| \leq \frac{d}{n+1}\EE{\max_{i \in [n+1]} \tilde{f}(X_i)} 
\ba(f) \le  \alpha,
\qquad  \text{ for all } f \in \F.
\end{equation}

In the special case where $\F = \{x \mapsto \sum_{G \in \mathcal{G}} \beta_G \One{x \in G}\}$ for some finite collection of subsets $\mathcal{G}$, the prediction set controls the group-conditional coverage rate: $\mathbb{P}\{Y_{n+1} \in \ch(X_{n+1}) \mid X_{n+1} \in G\} \geq 1-\alpha,\; \forall G \in \mathcal{G}$.

For more general function classes, they provide a corresponding error bound under mild regularity conditions on $\F$, with further discussion on the case where $\F$ is a reproducing kernel Hilbert space or a space of Lipschitz functions.\\

To summarize, there is a fundamental limitation in directly controlling the conditional coverage $\alpha(X_{n+1})$ under general feature distributions. Intuitively, this is because when $X$ is continuous, the values ${X_i}$ in the training data are almost surely distinct from the test point $X_{n+1}$, providing no information about the spread of $Y_{n+1} \mid X_{n+1}$. To address this issue, a surrogate target $\ba(f)$ is often adopted in various contexts. Although the relaxation~\eqref{eqn:approx_CC}, which requires strict control of $\ba(f)$ for multiple functions $f$, remains unattainable in general, the guarantee~\eqref{eqn:randomized_2} for a single randomized $f$ is achievable. Moreover, in specific cases—such as when $\F$ is a set of linear functions---it is possible to simultaneously control $\ba(f)$ in the sense of~\eqref{eqn:gibbs}, although the interpretation with respect to the specific 
function class is less clear.\\

%However, in the context of the formulation~\eqref{eqn:target_general}, 
Returning to the formulation~\eqref{eqn:target_general}, we observe that 
existing approaches remain limited: they either lack clear interpretability or restrict the choice of function space, thereby excluding many theoretically or practically useful cases. 
This work aims to provide a general methodology for predictive inference that controls the ``$f$-weighted conditional miscoverage" $\ba(f)$ for broad classes of function spaces $\F$, while offering simple and interpretable inferential guarantees.

%\begin{remark}[Stronger control of feature-conditional coverage]
    
%\end{remark}

\subsection{A new inferential target: probabilistic multiaccuracy via \texorpdfstring{$L^k$}{}-control}\label{sec:target}

We are now ready to introduce the target guarantee that we will prove to be achievable in the distribution-free sense for a broad class of function spaces. Intuitively, we consider a ``probabilistic multiaccuracy" condition of the form
\[\ba(f) = \EEst{\frac{f(X_{n+1})}{\EEst{f(X)}{f}} \cdot \alpha(X_{n+1})}{f} \textnormal{ is small with high probability, for $f$ drawn from $P_f$ on $\F$},\]
for a predefined measurable function space $\F$ and a distribution $P_f$ on $\F$ (the choice of 
$\F$ and $P_f$ will be discussed shortly). 
As a formal theoretical guarantee with the above property, we consider the following \textit{$L^k$-coverage guarantee}:
\[\Ep{f \sim P_f}{\ba(f)^k} \leq \alpha^k, \textnormal{ or equivalently, } \|\ba(f)\|_k = \Ep{f \sim P_f}{\ba(f)^k}^{1/k} \leq \alpha,\]
which bounds the $L^k$ distance between the function $f \mapsto \ba(f)$ and the zero function. Intuitively, the above conditions for $k=1,2,\dots$ form a family of guarantees where larger values of $k$ impose a stronger condition on $\ba(f)$—the condition with $k=1$ only controls the expectation $\EE{\ba(f)}$ by $\alpha$, allowing large $\ba(f)$ values for some proportion of $f \in \F$, while the condition with large $k$ essentially approximates $\|\ba(f)\|_\infty \leq \alpha$, or equivalently, $\ba(f) \leq \alpha$ for all $f$ in the support of $P_f$. We provide more details for the interpretation of this condition in Section~\ref{sec:moment}.

So far, our discussion is based on the simplifying assumption that $\hat C$ is a fixed function.
Recall that, in reality, the prediction set function $\ch$ is a random object constructed using the datasets $\dt$ and $\dc$. Thus, the formal statement of the guarantee we aim for is
\begin{equation}\label{eqn:k_moment_condition_0}
    \|\ba_\D(f)\|_k \leq \alpha, \text{ where } \ba_\D(f) = \EEst{\frac{f(X_{n+1})}{\EEst{f(X)}{f}} \cdot \alpha_\D(X_{n+1})}{f, \dt, \dc},
\end{equation}
where $\alpha_\D(\cdot)$ is defined in~\eqref{eqn:cond_misc}. 

Intuitively, we aim to construct a prediction set whose $f$-weighted conditional miscoverage rate $\ba_\D(f)$ is controlled in the $L^k$ norm sense over the space $\F$---which, in turn, implies that $\ba(f)$ is small for most functions in $\F$. As a remark, even when considering the strong notion of conditional coverage, which conditions not only on $f$ but also on the data $\dt$ and $\dc$, the main difficulty still arises mostly from conditioning on $f$---at a high level, this is because the i.i.d. datasets do not introduce much variability to the procedure $\ch$ and its resulting miscoverage rate, whereas the component $f$ exerts a more significant influence, with higher variance in the miscoverage rate.

%\zr{Also over the space of $\dc$ and $X_{n+1}$?}

\paragraph{The normalizing factor.}
A remaining (relatively minor) challenge in constructing a distribution-free achievable target is that the weight $f(x)/\EEst{f(X)}{f}$ requires the normalizing constant $\EEst{f(X)}{f}$ for exact computation, which is not available in practice.
Thus, we consider the following guarantee with the `data-driven normalization' instead:
\begin{equation}\label{eqn:k_moment_condition}
    \|\ta_\D(f)\|_k \leq \alpha, \text{ where } \ta_\D(f) = \EEst{\frac{f(X_{n+1})}{\widehat{\EEst{f(X)}{f}}} \cdot \alpha_\D(X_{n+1})}{f, \dt, \dc},
\end{equation}
where the estimator $\widehat{\EEst{f(X)}{f}}$ is constructed using $\dt$---more precisely, $(X_i')_{i \in [n_\text{train}]}$. We will discuss later a reasonable choice for this estimator. For now, we note that obtaining an accurate estimator of this term using $(X_i')_{i \in [n_\text{train}]}$ for each $f$ is a relatively straightforward task, as it amounts to estimating a mean from i.i.d. data, in contrast to the more involved steps that follow for inference.
%---which also rely on the outcome observations to address the conditional distribution $Y \mid X$, for which we have only a single $Y$ sample corresponding to each $X$ sample.
Although we introduce the notion $\ta_\D(f)$ and the condition~\eqref{eqn:k_moment_condition} for theoretical completeness, one may view the condition~\eqref{eqn:k_moment_condition_0} as the essential target, with the term $\EEst{f(X)}{f}$ being numerically computed in practice.

As a final remark, we highlight that the only restriction on the function space $\F$ in this formulation is measurability, so that a distribution $P_f$ can be defined on it. In Section~\ref{sec:method}, we provide a methodology that achieves the above guarantee for any prespecified measurable function space $\F$.

\subsection{Interpretation of the function-weighted conditional miscoverage control}\label{sec:interpretations}

The target guarantee~\eqref{eqn:k_moment_condition_0} states that the prediction set has small $\ba_\D(f)$ for ``most'' functions $f \in \F$. But what does it actually mean to have small $f$-weighted conditional miscoverage rates? Is this merely a mathematical approximation of the multiaccuracy condition~\eqref{eqn:multi_acc_ineq}?

In fact, more detailed and clear interpretations can be attained, depending on the  choice of the (measurable) function space $\mathcal{F}$. Below, we present different interpretations along with a few examples of function spaces. For simplicity, we temporarily disregard the data $\mathcal{D}$ and focus on what $\ba(f)$---with the subscript $\mathcal{D}$ omitted---represents, and we denote the prediction set as $C(X_{n+1})$ and the conditional miscoverage rate at $X_{n+1}=x$ as $\alpha(x)$.

\paragraph{Controlling the local miscoverage rate.}\label{sec:local}

We begin with a simple case where $\mathcal{F}$ consists of indicator functions, e.g.,  
\[\mathcal{F} = \{\one{B(x,r)} : x \in \mathcal{X}\}, \quad \text{where } B(x,r) = \{x' \in \mathcal{X} : d(x,x') \leq r\},\]  
where $d$ is a metric defined on $\mathcal{X}$ and $r>0$ is a predefined radius. For this space of neighborhoods of possible feature values, we have $\ba(\one{B}) = \PPst{Y_{n+1} \notin \ch(X_{n+1})}{d(X_{n+1},x) \leq r}$, which represents the miscoverage rate conditional on the event that $X_{n+1}$ lies within a neighborhood of the point $x$---by applying arguments similar to~\eqref{eqn:alpha_f_ind}. The target guarantee aims to keep this local-conditional miscoverage rate small for ``most'' $x \in \mathcal{X}$; 
one may also consider the class of ball-indicator functions with varying radii.
%\zr{to do: expand on the $L^k$ norm explanation.}

\paragraph{Controlling the smoothed conditional miscoverage rate.}\label{sec:kernel}

Suppose the function space consists of kernel functions:  
\begin{equation}\label{eqn:kernel}
    \mathcal{F} = \{f_z : z \in \mathcal{X} \}, \quad \text{ where } f_z : x \mapsto K(z,x), 
\end{equation}
for a kernel function $K$, e.g., the Gaussian kernel $K(z,x) = \exp(-\frac{1}{2h^2}\|z - x\|^2)$ with a bandwidth $h > 0$. Intuitively, the function $f_z$ can be seen as a surrogate for the delta function $\delta_z$ (Recall the discussion in Section~\ref{sec:multiaccuracy}). Our target guarantee aims to satisfy
\[\ba(f_z) = \EE{\tilde{f}_z(X_{n+1}) \cdot \alpha(X_{n+1})} \text{ is small for ``most'' } z \in \mathcal{X},\]  
where $\tilde{f}_z(x) = f_z(x) / \EE{f_z(X)}$ serves as the surrogate for $\delta_z$, 
obtained by normalizing $f_z$. Intuitively, $\ba(f_z)$ can be viewed as a kernel-smoothed version of the conditional miscoverage rate $\alpha(z) = \ba(\delta_z)$. One may vary the bandwidth parameter $h$
in the function class, and posit a distribution over it when defining a distribution $P_f$ on $\mathcal{F}$.

\paragraph{Controlling the conditional miscoverage rate for perturbed sample.}\label{sec:kernel_2}
An alternative interpretation of $\ba(f_z)$ under the setting~\eqref{eqn:kernel} is to view it as the conditional miscoverage rate at a perturbed sample. Observe that, by definition, for each $z \in \mathcal{X}$, we have  
\[\EE{\tilde{f}_z(X)} = 1 = \int \tilde{f}_z(x) \cdot f_X(x) \,\mathrm{d}x,\]  
where $f_X$ denotes the density function of $P_X$, assuming that $P_X$ has a well-defined density function. Therefore, $\tilde{f}_z \cdot f_X$ is also a valid density function, and now we let $\tilde{Z}$ be a sample drawn from its corresponding distribution, denoted $\tilde{f}_z^X$.
Intuitively, $\tilde{Z}$ is drawn around the point $z$ since its distribution $\tilde{f}_z^X$ is based on the kernel function $f_z$, but the distribution is reweighted by $f_X$. For example, if $f_X(x) \approx 0$, then $x$ is unlikely to be drawn as the value of $\tilde{Z}$, even if it is near $z$. On the other hand, if $f_X(x)$ is relatively large, then $x$ is more likely to be drawn among the values whose distance from $z$ is similar. In summary, $\tilde{Z}$ can be viewed as a perturbation of $z$, with its distribution weighted by $f_X$.

%To be more concrete, using 
Having drawn a sample $\tilde{Z}$ from $\tilde{f}_z^X$, we can write $\ba(f_z)$ as
\[\ba(f_z) = \EE{\tilde{f}_z(X_{n+1}) \cdot \alpha(X_{n+1})} = \int \tilde{f}_z(x) \alpha(x) f_X(x) \, \mathrm{d}x = \Ep{\tilde{Z} \sim \tilde{f}_z^X}{\alpha(\tilde{Z})}.\]
Therefore, by changing the notation from $\tilde{Z}$ to $\tilde{X}_{n+1}$ for better interpretability, and by viewing $z$ as a realization of $X_{n+1}$, $\ba(f_z)$ can be represented as
\[\EEst{\alpha(\tilde{X}_{n+1})}{X_{n+1} = z} = \PPst{\tilde{Y}_{n+1} \notin \ch(\tilde{X}_{n+1})}{X_{n+1} = z}.\]
which denotes the miscoverage rate for the perturbed test sample $(\tilde{X}_{n+1},\tilde{Y}_{n+1})$, drawn as $\tilde{X}_{n+1} \mid X_{n+1} \sim \tilde{f}_{X_{n+1}}^X$ and $\tilde{Y}_{n+1} \mid \tilde{X}_{n+1} \sim P_{Y \mid X}$. 
%To be more precise, the $L^k$-coverage guarantee~\eqref{eqn:k_moment_condition} implies
%\[\left\|\PPst{\tilde{Y}_{n+1} \notin \ch(\tilde{X}_{n+1})}{X_{n+1}}\right\|_k \leq \alpha,\]
%where the $L^k$-norm is taken with respect to $X_{n+1} \sim P_Z$, and $P_Z$ is a predefined distribution on $\X$ representing the kernel centers---which can be equal to $P_X$ if we use a split of the data to construct the kernel function samples, as discussed later in Section~\ref{sec:P_f_choice}. Note, however, that $P_Z$ need not be equal to $P_X$, as it essentially serves as a parameter for the $L^k$-distance measure and need not represent the distribution of the actual $X_{n+1}$. To avoid confusion, one may simply use $X$ and $\tilde{X}$ instead of $X_{n+1}$ and $\tilde{X}_{n+1}$ in the above interpretation.

As a remark, a similar interpretation appears in~\citet{gibbs2025conformal}, where the authors provide an interpretation in the sense of covariate shift defined by $f$. We choose to interpret it as a perturbation rather than a shift, as we are primarily interested in---and our method accommodates---the setting where $\mathcal{F}$ is set as a collection of kernel functions, e.g.,~\eqref{eqn:kernel}.

\paragraph{Surrogate of the strict multiaccuracy condition.}

Finally, following the motivation outlined in Section~\ref{sec:target}, one can simply consider a large space of diverse functions to better approximate the space of all measurable functions and the multiaccuracy condition~\eqref{eqn:multi_acc_ineq}. For example, one can set  
\[\mathcal{F} = \{\text{linear functions}\} \cup \{\text{Gaussian kernels}\} \cup \{\text{uniform kernels}\},\]  
and define a distribution $P_f$ as a mixture of distributions, each defined on a component of the union. 

As a remark, in practical settings where a certain level of smoothness in the conditional miscoverage rates can be expected, including kernel-type functions may suffice for a useful approximation. This is because we primarily need to approximate the `space of delta functions,' as described in the section~\ref{sec:kernel}, whereas the space of all measurable functions can generally be an unnecessarily conservative formulation.

\subsection{The \texorpdfstring{$L^k$} --coverage guarantee}\label{sec:moment}
In Section~\ref{sec:interpretations}, we set aside the randomness in 
$\hat C$ and focus on the interpretation from the test-conditional perspective. 
In this section, we take the training randomness back into consideration 
and give a holistic interpretation of the \texorpdfstring{$L^k$} --coverage guarantee.

%Here, we provide a brief remark on how the $L^k$-coverage guarantee can be interpreted. 
We first note that controlling the conditional coverage rate is fundamentally different from controlling the marginal coverage rate, in the sense that it requires controlling a function rather than a single value. Specifically, for the standard target of controlling the conditional miscoverage rate $\alpha(X_{n+1})$, the goal is essentially to control the function
\[x \mapsto \alpha_\D(x) = \PPst{Y_{n+1} \notin \ch(X_{n+1})}{X_{n+1}=x, \D},\]
%where again, for simplicity, we temporarily treat the prediction set $\ch$ as fixed and omit the data from the conditioning. 
Similarly, in our problem, the aim is to control the function
\[f \mapsto \ba_\D(f) = \EEst{\tilde{f}(X_{n+1}) \cdot \alpha(X_{n+1})}{f, \D}.\]
For such goals, a natural target is to bound the ``size" of the miscoverage rate function---$\alpha(\cdot)$ or $\ba(\cdot)$. Alternatively, one can view the problem as controlling a random variable---which is mathematically a measurable function---such as $\alpha_\D(X_{n+1})$ or $\ba(f)$.

For instance, one may aim to control the $L^1$ norm by a predetermined level $\alpha$, meaning that the target guarantee can be represented as
\[\|\alpha_\D(X_{n+1})\|_1 = \EE{\alpha_\D(X_{n+1})} = \PP{Y_{n+1} \notin \ch(X_{n+1})} \leq \alpha,\]
which is equivalent to the marginal coverage guarantee. 
Of course, controlling only the mean of the target function might be considered weak, and one might consider the $L^\infty$ norm instead:
\[\|\alpha_\D(X_{n+1})\|_\infty = \sup_{x \in \X} |\alpha_\D(x)| \leq \alpha, \quad \text{ i.e., } \quad \alpha_\D(x) \leq \alpha \text{ for all } x \in \X,\]
which is generally impossible to achieve, as discussed previously. Generally, one may consider controlling the $L^k$ distance, in which case the target guarantee is given by $\|\alpha_\D(X_{n+1})\|_k \leq \alpha$. %---equivalently, this corresponds to controlling the $k$-th moment of the conditional miscoverage rate: $\EE{\alpha_\D(X_{n+1})^k} \leq \alpha^k$. 
Note that this condition serves as a stronger target for larger $k$, since $\|Z\|_p \leq \|Z\|_q$ generally holds for $1 \leq p \leq q$ and a random variable $Z$. 

Similarly, to control $f \mapsto \ba_\D(f)$, we generally consider the $L^k$ condition $\|\ba_\D(f)\|_k \leq \alpha$, as discussed in Section~\ref{sec:target}.
%For an intuitive illustration, observe that the $L^k$-condition leads to the following bound, under $\alpha = 0.1$:
%\[\PP{\alpha(X_{n+1}) \geq 1/2} \leq 2^k \cdot \EE{\alpha(X_{n+1})^k} \leq 2^k \alpha^k = (0.2)^k.\]
%That is, the probability of having a ``bad prediction set" whose coverage doesn't even reach $50\%$ is allowed to be 0.2 under the marginal $L^1$-guarantee, while the $L^k$-coverage guarantee with $k=2$ already leads to a much stricter bound of 0.04.

\paragraph{Controlling $\|\alpha_\D(X_{n+1})\|_k$ versus $\|\tilde \alpha_\D(f)\|_k$.}
Generally, in the setting of i.i.d.~data from a continuous distribution, achieving the $L^k$-coverage condition beyond $k=1$---the marginal coverage guarantee---is hard. Intuitively, this is because we have only one $Y$ sample for each $X$ sample, making it difficult to learn beyond the first moment. Attaining a useful prediction set with the $L^k$-coverage guarantee for $k \geq 2$ may be possible if the dataset somehow has a more informative structure. For example, \citet{lee2023distribution} provide a methodology that ensures the second-moment coverage guarantee $\|\alpha_\D(X_{n+1})\|_2 \leq \alpha$ in cases where the outcome $Y$ is drawn repeatedly for each $X$ sample.

On the other hand, for the alternative target $f \mapsto \ba_\D(f)$, it turns out that the $L^k$-coverage guarantee for $k \geq 2$ is achievable in the i.i.d.~setting---as we discuss in Section~\ref{sec:method}. The key idea is to artificially introduce a repeated measurement structure in the dataset by first drawing samples of functions $f$ and then assigning $k$ pairs of $(X, Y)$ samples to each drawn function. The detailed steps for constructing a prediction set that satisfies the $L^k$-coverage condition are provided in the next section.

\paragraph{Implications on the tail bounds.} 
One direct implication of the $L^k$-coverage guarantee is a bound on the tails of 
the training- and test-conditional coverage: for any $\epsilon \in (0,1)$, 
we have 
\begin{align}\label{eqn:tailbound}
\PP{\ba_\D(f) \ge \epsilon} \le \frac{\EE{\ba_\D(f)^k}}{\epsilon^k} 
\le \frac{\alpha^k}{\epsilon^k} = (\alpha / \epsilon)^k.
\end{align}
For example, when $\alpha = 0.1, \epsilon = 0.2$, and $k=2$, the upper bound in~\eqref{eqn:tailbound} 
evaluates to $0.25$. In other words, with probability at least $0.75$, the conditional coverage is at least 
$80\%$ {\em given the training and test input}. The upper bound quickly drops as $K$
increases; if $K=3$, the tail probability becomes $0.125$.

\section{Inference with \texorpdfstring{$L^k$}{}-conditional coverage control}\label{sec:method}

\subsection{Conditional predictive inference with \texorpdfstring{$L^2$}{}-coverage guarantee}
Now, we present a method for constructing the prediction set $\ch(X_{n+1})$ that satisfies the guarantee~\eqref{eqn:k_moment_condition}. For conciseness, we outline the procedure for the case of \( k=2 \)---which turns out to be a particularly useful choice in practice, as we will discuss later. The extension to larger values of $k$ is straightforward and is deferred to Appendix~\ref{sec:method_k}.

We first fix the function space $\F$ of interest and a distribution $P_f$ on $\F$, 
%We consider a space of bounded functions and let $b = \sup_{f \in \F} \|f\|_\infty$. 
where we assume the functions in $\F$ are bounded with $b = \sup_{f \in \F} \|f\|_\infty$. 
Then, letting $m = \lfloor n/2 \rfloor$, we draw 
\[f_1, f_2, \dots, f_m \iidsim P_f.\]
Let $\df = (f_i)_{i \in [m]}$ represent the set of sampled functions.

Next, we construct a nonconformity score $s: \X \times \Y \rightarrow \R^+$ using the training data $\dt$, and then consider a family of prediction set functions $\{C_t : t \geq 0\}$, where $C_t(x) = \{y \in \Y : s(x,y) \leq t\}$. We now need to find a $\hat{t}$ such that the prediction set $C_{\hat{t}}(X_{n+1})$ satisfies the target guarantee~\eqref{eqn:k_moment_condition}. 
%To do so, we first ``normalize'' the sampled functions $f_1,\ldots,f_m$ with the training data.
For each $t \geq 0$ and $i \in [n]$, let $Z_i^t = \One{Y_i \notin C_t(X_i)}$. Now, using the training data $\dt$, we `normalize' the functions $f_1,\cdots,f_m$. Specifically, let $\mt = \lfloor n_\text{train}/2 \rfloor$, and define  
\begin{equation}\label{eqn:normalizing}
    \tilde{f}_i = f_i/\gamma(f_i) \text{ for $i \in [m]$, where }\gamma(f) = \left(\frac{1}{\mt+1}\left( \sum_{l=1}^{\mt} f(X_{2l-1}') f(X_{2l}') + b^2\right)\right)^{1/2}.
\end{equation}
%\zr{What is $j$ in the above? Should it be $i \in [m]$?}

We then construct the prediction set $\ch(X_{n+1})$ based on the following formula:

\begin{equation}\label{eqn:chat}
\begin{split}
    &\ch(x) = C_{\hat{t}}(x) = \{y \in \Y : s(x,y) \leq \hat{t}\}, \text{ where }\\
    &\hat{t} = \min\left\{t \geq 0 : \frac{\sum_{i=1}^m  \tilde{f}_i(X_{2i-1})\tilde{f}_i(X_{2i})Z_{2i-1}^tZ_{2i}^t+\frac{b^2}{\gamma(f)^2}}{\sum_{i=1}^m \tilde{f}_i(X_{2i-1})\tilde{f}_i(X_{2i})} \leq \alpha^2\right\},
\end{split}
\end{equation}
where $f$ is an additional sample from $\F$ independently drawn from $P_f$. 
Here, the minimum in the definition of $\hat{t}$ is well-defined, as the variables $Z_i^t$ are discrete, taking values in $\{0,1\}$. We define $\hat{t} = \infty$ if there is no $t\geq 0$ that satisfies the condition. The overall procedure is summarized in Algorithm~\ref{alg:main}.
%\zr{(1) Would it be possible to use $\frac{b^2}{\min_f \frac{1}{n_{\textnormal{train}}/2}\sum^{n_{\textnormal{train}}/2}_{i=1}f(X_{2i-1}')f(X_{2i}')}$ as the offset? Is it too conservative? 
%(2) It bothers me a bit that $\hat t$ does not depend on $X_{n+1}$, since intuitively, we should choose the threshold that is valid around $X_{n+1}$; on a related note, can we actually let $f$ depend on $X$?}

\begin{algorithm}[h!]
\caption{Conditional predictive inference with $L^2$-coverage guarantee}
\label{alg:main}
\KwIn{Training data $\dt =(X_i',Y_i')_{i \in [n_\text{train}]}$, 
calibration data $\dc = (X_i,Y_i)_{i \in [n]}$, test input $X_{n+1}$, 
target level $\alpha$, sampling distribution of functions $P_f$, with a uniform bound $b$.}

\vskip 2ex

{\bf Step 1:} Construct the nonconformity score function $s : \X \times \Y\rightarrow \R^+$ using the training data $\dt$.
\vskip 1ex
{\bf Step 2:} Draw a sample of functions $f_1,f_2,\cdots,f_m \iidsim P_f$ and $f \sim P_f$, where $m = \lfloor n/2 \rfloor$.
\vskip 1ex
{\bf Step 3:} Compute the normalizing constants $(\gamma(f_i))_{i \in [m]}$ and $\gamma(f)$, based on~\eqref{eqn:normalizing}.
\vskip 1ex
{\bf Step 4:} Compute the threshold $\hat{t}$ as
\[\hat{t} = \min\left\{t \geq 0 : \frac{\sum_{i=1}^m  \tilde{f}_i(X_{2i-1})\tilde{f}_i(X_{2i})Z_{2i-1}^tZ_{2i}^t+\frac{b^2}{\gamma(f)^2}}{\sum_{i=1}^m \tilde{f}_i(X_{2i-1})\tilde{f}_i(X_{2i})} \leq \alpha^2\right\},\]
\hspace{13mm} where $Z_i^t = \One{s(X_i,Y_i) > t}$, for $i \in [n]$.
\vskip 2ex 

\KwOut{Prediction set $\ch(X_{n+1}) = \{y \in \Y : s(X_{n+1},y) \leq \hat{t}\}$.}
\end{algorithm}

\begin{remark}[Randomness]
We note that although the additional sample $f$ introduces some randomness into the procedure, its effect remains insignificant, as the term $b^2/\gamma(f)^2$ is dominated by the summation of $m$ terms, $\sum_{i=1}^m \tilde{f}_i(X_{2i-1})\tilde{f}_i(X_{2i})Z_{2i-1}^tZ_{2i}^t$---unless the sample size is too small. We further note that, as we detail below, the conditional coverage rate we consider is defined by conditioning on all sources of randomness, including $\dt$, $\dc$, $\X_{n+1}$, as well as $\df$ and $f$---ensuring that the resulting prediction set maintains reliability for the specific observations we have and the specific function samples drawn. 

To obtain a procedure that is not randomized by $f$, one can alternatively define a more conservative threshold $\hat{t}$ as
\[\hat{t} = \min\left\{t \geq 0 : \frac{\sum_{i=1}^m \tilde{f}_i(X_{2i-1})\tilde{f}_i(X_{2i})Z_{2i-1}^tZ_{2i}^t + \frac{b^2}{\min_{f \in \F}\gamma(f)^2}}{\sum_{i=1}^m \tilde{f}_i(X_{2i-1})\tilde{f}_i(X_{2i})} \leq \alpha^2\right\},\]
where in practice the term $\min_{f \in \F}\gamma(f)^2$ typically needs to be computed approximately---e.g., from large samples of functions in $\F$. However, this approach can result in conservative prediction sets in several cases, and thus we present the procedure~\eqref{eqn:chat} as our main recommendation.
\end{remark}

\subsection{Theoretical guarantees}
Now, we prove the coverage guarantee of the above prediction set. We show that the resulting guarantee controls the following notion of conditional miscoverage rate, which conditions on all sources of randomness:

\begin{equation}\label{eqn:cond_misc_f}
    \alpha_\D(x) = \PPst{Y_{n+1} \notin \ch(X_{n+1})}{X_{n+1} = x, \D}, \text{ where } \D = (\dt, \dc, \df ,f).
\end{equation}
This is essentially equivalent to the conditional probability $\mathbb{P}\{Y_{n+1} \notin \ch(X_{n+1}) \mid X_{n+1} = x, \ch\}$, where the only source of randomness is $Y_{n+1}$. The formal theoretical 
guarantee is stated as follows.
\begin{theorem}\label{thm:main}
    The prediction set $\ch(X_{n+1})$ given as~\eqref{eqn:chat} satisfies the $L^2$-coverage guarantee, i.e., 
    \[\|\ta_\D(f)\|_2 \leq \alpha, \text{ where } \ta_\D(f) = \EEst{\frac{f(X_{n+1})}{\gamma(f)} \cdot \alpha_\D(X_{n+1})}{\D},\]
    where $\alpha_\D$ denotes the conditional miscoverage rate, defined as~\eqref{eqn:cond_misc_f}.
%\zr{Is the $f$ here the same as the one drawn for the construction of $\hat t$?}
\end{theorem}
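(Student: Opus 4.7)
The plan is to bound $\EE{\ta_\D(f)^2}$ by $\alpha^2$, which then gives $\|\ta_\D(f)\|_2 \le \alpha$. I first unfold the square using two conditionally independent test pairs: let $(X^{(1)}_{n+1},Y^{(1)}_{n+1})$ and $(X^{(2)}_{n+1},Y^{(2)}_{n+1})$ be iid draws from $P_{X,Y}$ independent of $\D$. Since $\alpha_\D(x) = \PPst{Y_{n+1}\notin\ch(X_{n+1})}{X_{n+1}=x,\D}$, the tower property yields
\[
\EE{\ta_\D(f)^2} = \EE{\tilde f_{m+1}(X^{(1)}_{n+1})\tilde f_{m+1}(X^{(2)}_{n+1})\,\One{Y^{(1)}_{n+1}\notin\ch(X^{(1)}_{n+1})}\One{Y^{(2)}_{n+1}\notin\ch(X^{(2)}_{n+1})}},
\]
where I write $f_{m+1} := f$ and $\tilde f_i := f_i/\gamma(f_i)$, and relabel the test pairs as $(X_{2m+1},Y_{2m+1})$ and $(X_{2m+2},Y_{2m+2})$. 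Conditional on $\dt$, the $m+1$ triples $\mathbf T_i := (f_i,(X_{2i-1},Y_{2i-1}),(X_{2i},Y_{2i}))$ for $i\in[m+1]$ are iid.

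The key idea is to replace the algorithm's threshold $\hat t$ (which treats the $(m+1)$-th pair asymmetrically) with a symmetric surrogate. Define
\[
\tau := \min\left\{t\ge 0 : \frac{\sum_{i=1}^{m+1}\tilde f_i(X_{2i-1})\tilde f_i(X_{2i})\,Z^t_{2i-1}Z^t_{2i}}{\sum_{i=1}^{m+1}\tilde f_i(X_{2i-1})\tilde f_i(X_{2i})}\le\alpha^2\right\}.
\]
Since $\tilde f_{m+1}(X_{2m+1})\tilde f_{m+1}(X_{2m+2})Z^{\hat t}_{2m+1}Z^{\hat t}_{2m+2} \le b^2/\gamma(f_{m+1})^2$ by the uniform bound on $f$, combining this with the defining inequality of $\hat t$ shows that $\hat t$ also satisfies the inequality defining $\tau$; thus $\tau \le \hat t$, and since $t\mapsto Z^t$ is non-increasing, $Z^{\hat t}\le Z^\tau$ pointwise. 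This yields the monotonicity bound
\[
\EE{\ta_\D(f)^2} \le \EE{\tilde f_{m+1}(X_{2m+1})\tilde f_{m+1}(X_{2m+2})\,Z^\tau_{2m+1}Z^\tau_{2m+2}}.
\]
Because $\tau$ is a permutation-invariant function of $\mathbf T_1,\dots,\mathbf T_{m+1}$, exchangeability of these triples (conditional on $\dt$) lets me rewrite the right-hand side as $\EE{\frac{1}{m+1}\sum_{j=1}^{m+1}\tilde f_j(X_{2j-1})\tilde f_j(X_{2j})Z^\tau_{2j-1}Z^\tau_{2j}}$; applying the defining inequality of $\tau$ inside the expectation and then using exchangeability once more bounds this by $\alpha^2\cdot\EE{\tilde f_1(X_1)\tilde f_1(X_2)}$.

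The final---and in my view most delicate---step is to show $\EE{\tilde f_1(X_1)\tilde f_1(X_2)}\le 1$, which hinges on the specific form of $\gamma$. Conditioning on $f_1$, the $\mt+1$ products $U_0 := f_1(X_1)f_1(X_2)$ and $U_l := f_1(X'_{2l-1})f_1(X'_{2l})$ for $l\in[\mt]$ are iid and each bounded by $b^2$. Using $b^2 \ge U_0$, the denominator $(\mt+1)\gamma(f_1)^2 = \sum_{l=1}^{\mt}U_l + b^2$ dominates $\sum_{l=0}^{\mt}U_l$, so
\[
\EE{\tilde f_1(X_1)\tilde f_1(X_2)} = \EE{\frac{(\mt+1)U_0}{\sum_{l=1}^{\mt}U_l + b^2}} \le \EE{\frac{(\mt+1)U_0}{\sum_{l=0}^{\mt}U_l}} \le 1,
\]
where the last inequality uses that by symmetry each iid summand $U_j/\sum_l U_l$ has expectation at most $1/(\mt+1)$ (with the convention $0/0=0$). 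Chaining the inequalities yields $\EE{\ta_\D(f)^2}\le\alpha^2$. The two conceptual ingredients making the proof work are (i) the symmetrization via $\tau$, which converts the one-sided algorithmic criterion into one amenable to exchangeability; and (ii) the additive $b^2$ slack inside $\gamma$, which is chosen precisely so that the normalization does not over-inflate the ratio when averaged over iid copies.
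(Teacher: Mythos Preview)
Your proof is correct and follows essentially the same route as the paper's: the same product-of-two-copies unfolding, the same symmetric surrogate threshold (the paper calls it $\tilde t$), the same monotonicity argument $\tau\le\hat t$ via the $b^2/\gamma(f)^2$ slack, the same exchangeability-then-apply-the-$\tau$-criterion step, and the same final bound on $\EE{\tilde f(X)\tilde f(X')}$ using $b^2\ge U_0$ and symmetry of iid ratios. Your write-up is in places slightly more explicit than the paper's (e.g., spelling out why $\tau\le\hat t$ rather than asserting it ``by definition,'' and writing $\le 1$ rather than $=1$ in the last display to account for the $0/0$ convention), but there is no substantive difference in strategy.
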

The proof of Theorem~\ref{thm:main} is deferred to Appendix~\ref{appd:proof_main}. 
Several remarks are in order.

\begin{remark}
The choice of the normalizing constant $\gamma(f)$ in~\eqref{eqn:normalizing} was made to ensure  
\[\gamma(f)^2 = \frac{1}{\mt+1} \left( \sum_{l=1}^{\mt} f(X_{2l-1}') f(X_{2l}') + b^2 \right) \approx \Epst{X,X' \iidsim P_X}{f(X)f(X')}{f} = \EEst{f(X)}{f}^2.\]
We choose $\gamma(f)$ as the variable whose square accurately approximates $\EEst{f(X)}{f}^2$, rather than choosing it to approximate $\EEst{f(X)}{f}$ directly. This is because the objective of the approximation is the guarantee $\EE{\ba_\D(f)^2} \leq \alpha^2$ rather than $\ba_\D(f)$ itself---observe that
\[\EE{\ba_\D(f)^2} = \EE{\EEst{\frac{f(X_{n+1})}{\EEst{f(X)}{f}} \cdot \alpha_\D(X_{n+1})}{\D}^2} = \EE{\frac{1}{\EEst{f(X)}{f}^2}\cdot\EEst{f(X_{n+1})\cdot\alpha_\D(X_{n+1})}{\D}^2}.\]

\end{remark}

\begin{remark}[Tightness]
The equality in the $L^2$-coverage condition, $\|\ta_\D(f)\|^2 \leq \alpha^2$, is approximately attained in the `ideal' case where $\alpha_\D(X_{n+1}) \equiv \alpha$, since in this setting, we have
\begin{multline*}
\EE{\ta_\D(f)^2} = \alpha^2 \cdot \EE{f(X_{n+1})f(X_{n+2}) / \gamma(f)^2} = \alpha^2 \cdot \EE{\frac{(\mt+1)\cdot f(X_{n+1})f(X_{n+2})}{\sum_{l=1}^{\mt} f(X_{2l-1}') f(X_{2l}') + b^2}}\\
\lesssim \alpha^2 \cdot \EE{\frac{(\mt+1)\cdot f(X_{n+1})f(X_{n+2})}{\sum_{l=1}^{\mt} f(X_{2l-1}') f(X_{2l}') + f(X_{n+1})f(X_{n+2})}} = \alpha^2,
\end{multline*}
where $X_{n+2} \sim P_X$ is a hypothetical sample drawn independently of $\D$ and $X_{n+1}$, and the last equality follows from the exchangeability of the pairs $(X_1', X_2'), \dots, (X_{2\mt-1}',X_{2\mt}'), (X_{n+1}, X_{n+2})$.
\end{remark}

\begin{remark}[Alternative choice of $\hat t$]
    The threshold $\hat{t}$ in~\eqref{eqn:chat} can alternatively be defined as
    \[\hat{t} = \min\left\{t \geq 0 : \frac{1}{m+1}\cdot \left(\sum_{i=1}^m  \tilde{f}_i(X_{2i-1})\tilde{f}_i(X_{2i})Z_{2i-1}^tZ_{2i}^t+\frac{b^2}{\gamma(f)^2}\right) \leq \alpha^2\right\}.\]
    The steps in the proof of Theorem~\ref{thm:main} ensure that the prediction set $\ch(X_{n+1}) = \{y \in \Y : s(x,y) \leq \hat{t}\}$ with this choice of $\hat{t}$ satisfies the same coverage guarantee. In practice, the two definitions of $\hat{t}$ yield nearly identical results, and the difference between them is typically negligible. Nevertheless, we adopt the definition in~\eqref{eqn:chat} as the primary formulation, as it can, at least intuitively, account for certain biases that may be present in the training data.
\end{remark}

\subsection{Choice of \texorpdfstring{$P_f$}{}}\label{sec:P_f_choice}

The proposed procedure~\ref{alg:main} requires a predefined function-sampling distribution $P_f$. While $P_f$ can, in principle, be any distribution over the measurable function space $\mathcal{F}$, it may also be chosen in a manner that adapts to the unknown feature distribution $P_X$. To provide some intuition, we present examples below in the context of the kernel function space $\mathcal{F} = \{f_z : z \in \mathcal{X}\}$, where $f_z : x \mapsto K(z, x)$. In this setting, defining the function-sampling distribution $P_f$ reduces to specifying a distribution for a random variable $Z$ over $\mathcal{X}$ (see Section~\ref{sec:kernel}).

\paragraph{Procedure with a prespecified \texorpdfstring{$P_f$}{}.}

Assuming that the feature space $\mathcal{X}$ is known---which is typically the case---one can define a distribution $P_Z$ over $\mathcal{X}$, sample $Z_1, \dots, Z_m \overset{\text{i.i.d.}}{\sim} P_Z$, and then set $f_i = f_{Z_i}$ for each $i \in [m]$. For example, if one aims to ensure uniformly good conditional inference across the feature space, $P_Z$ can be chosen as $\text{Unif}(\mathcal{X})$---in the case where the uniform distribution is well-defined on $\mathcal{X}$. Alternatively, one may assign greater weight to a specific region of interest in the feature space, for example by selecting a bell-shaped distribution centered around that region or by defining a uniform distribution over a ball contained within it.

\paragraph{Procedure with unlabeled feature samples.}
Suppose we have a set of unlabeled test data points $\tilde{X}_{1}, \cdots, \tilde{X}_{\tilde{n}}$. For example, if a group of test points $X_{n+1}, \cdots, X_{n+n_\text{test}}$ is observed instead of a single one, we may treat $X_{n+2}, \cdots, X_{n+n_\text{test}}$ as an unlabeled set that we can use for the purpose of making inference on $X_{n+1}$. In this case, the function samples can be set directly as $f_i = f_{\tilde{X}_i}$ for $i \in [m]$ (when $\tilde{n} \geq m = \lfloor n/2\rfloor$), %\zr{What are $\tilde X_i$ and $\tilde n$ here?}
which naturally assigns more weight to regions of the feature space with higher likelihood. As a remark, this strategy naturally addresses the covariate shift setting in a sense, as it allows the test points to be directly used as kernel centers.

\subsection{Extension}

The underlying idea behind procedure~\ref{alg:main} is to first sample functions $(f_i)_{i \in [m]}$ and then use two $(X,Y)$ observations to learn---roughly speaking---the conditional distribution of $Y$ given $X$, weighted or perturbed by $f$. Here, the sample size used to cover the function space $\F$ is $m = \lfloor n/2 \rfloor$, while for each sampled $f$, the sample size for learning the $f$-conditional distribution is only two. This might result in inefficient use of the data—for example, in a setting where $f$ is a uniform kernel with a relatively small bandwidth, it is possible that for most $f_i$'s, the corresponding two observations both fall outside the ball, generating a noninformative prediction set. Can we make a different choice in this tradeoff between the resources used for covering sufficient functions and those used for learning the $f$-conditional distributions? The answer is positive, as we explain below.

Suppose now that for a positive integer $r \geq 2$, we draw $m = \lfloor n/r \rfloor$ functions $f_1, \dots, f_m \iidsim P_f$ and also $f \sim P_f$, and then assign $r$ points to each $f_i$, denoted as $(X_{k,j}, Y_{k,j})_{j \in [r]}$, where $X_{k,j} = X_{(k-1)r+j}$ and $Y_{k,j} = Y_{(k-1)r+j}$ for $k \in [m]$ and $j \in [r]$. Then we define
\begin{equation}\label{eqn:chat_r}
\begin{split}
    &\ch(x) = \{y \in \Y : s(x,y) \leq \hat{t}\}, \text{ where }\\
    &\hat{t} = \min\left\{t \geq 0 : \frac{\sum_{k=1}^m \frac{1}{\binom{r}{2}} \sum_{1 \leq j_1 < j_2 \leq r} \tilde{f}_k(X_{k,j_1})\tilde{f}_k(X_{k,j_2})Z_{k,j_1}^t Z_{k,j_2}^t+\frac{b^2}{\gamma(f)^2}}{\sum_{k=1}^m \frac{1}{\binom{r}{2}} \sum_{1 \leq j_1 < j_2 \leq r} \tilde{f}_k(X_{k,j_1})\tilde{f}_k(X_{k,j_2})} \leq \alpha^2\right\},
\end{split}
\end{equation}
where $\tilde{f}_k$ is defined as in~\eqref{eqn:normalizing}, and $Z_{k,j}^t = \One{s(X_{k,j},Y_{k,j}) > t}$ for $k \in [m], j \in [r]$, and $t \geq 0$.
The following theorem proves the validity of the above procedure.
\begin{theorem}\label{thm:generalized}
    The prediction set $\ch(X_{n+1})$ given as~\eqref{eqn:chat_r} satisfies the $L^2$-coverage guarantee $\|\ta_\D(f)\|_2 \leq \alpha$, where $\ta_\D(f)$ is defined as in Theorem~\ref{thm:main}.
\end{theorem}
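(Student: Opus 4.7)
The plan is to mirror the strategy used for Theorem~\ref{thm:main}, introducing a $U$-statistic structure on the test side so that a single symmetrization across the $m+1$ units closes the argument.

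First, I would rewrite $\EE{\ta_\D(f)^2}$ as the expectation of a test-side $U$-statistic of the same form as the calibration terms. Introducing $r$ i.i.d.\ hypothetical test points $(X_{n+1},Y_{n+1}),\ldots,(X_{n+r},Y_{n+r}) \sim P_{X,Y}$, independent of everything else, the identity
\[\ta_\D(f)^2 = \EEst{\tilde f(X_{n+j_1})\tilde f(X_{n+j_2})\alpha_\D(X_{n+j_1})\alpha_\D(X_{n+j_2})}{\D}\]
holds for any $j_1 \neq j_2$. Averaging over the $\binom{r}{2}$ pairs and converting $\alpha_\D$ to an indicator via $\alpha_\D(x) = \EEst{Z_{n+j}^{\hat t}}{X_{n+j}=x,\D}$ gives $\EE{\ta_\D(f)^2} = \EE{A_0(\hat t)}$, where
\[A_0(t) = \frac{1}{\binom{r}{2}}\sum_{j_1<j_2} \tilde f(X_{n+j_1})\tilde f(X_{n+j_2}) Z_{n+j_1}^t Z_{n+j_2}^t\]
matches the form of the calibration terms $A_k(t)$ implicit in~\eqref{eqn:chat_r}; define $B_0$ analogously by dropping the $Z$ factors.

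Next, introduce the symmetric oracle threshold
\[T = \min\Bigl\{t\geq 0 : \sum_{k=0}^m A_k(t) \leq \alpha^2 \sum_{k=0}^m B_k\Bigr\},\]
defined over all $m+1$ units. Since each pairwise term satisfies $\tilde f(x)\tilde f(x') \leq b^2/\gamma(f)^2$ and hence $A_0(t) \leq B_0 \leq b^2/\gamma(f)^2$, the calibration-side condition satisfied by $\hat t$ in~\eqref{eqn:chat_r} upgrades to the symmetric one at $t = \hat t$:
\[\sum_{k=0}^m A_k(\hat t) \leq \sum_{k=1}^m A_k(\hat t) + \frac{b^2}{\gamma(f)^2} \leq \alpha^2 \sum_{k=1}^m B_k \leq \alpha^2 \sum_{k=0}^m B_k,\]
so $T \leq \hat t$, and because each $Z_i^t$ is non-increasing in $t$, one obtains $A_0(\hat t) \leq A_0(T)$ pointwise.

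Finally, I would conclude via exchangeability. Conditioning on $\dt$ (which fixes both the score $s(\cdot,\cdot)$ and the map $\gamma(\cdot)$), the $m+1$ units---each consisting of a function drawn i.i.d.\ from $P_f$ paired with an $r$-tuple of i.i.d.\ $(X,Y)$ samples---are i.i.d., and $T$ is a symmetric function of them. Hence $\EEst{A_k(T)}{\dt}$ is constant in $k$, and summing the defining inequality for $T$ yields $\EEst{A_0(T)}{\dt} \leq \alpha^2 \EEst{B_0}{\dt}$ and thus $\EE{A_0(T)} \leq \alpha^2 \EE{B_0}$. The remaining piece $\EE{B_0}\leq 1$ follows from exchangeability of the $\mt+1$ pairs $\{(X'_{2l-1}, X'_{2l})\}_{l=1}^\mt \cup \{(X_{n+1}, X_{n+2})\}$ combined with $f(X_{n+1})f(X_{n+2}) \leq b^2$, exactly as in the tightness remark after Theorem~\ref{thm:main}. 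Chaining,
\[\EE{\ta_\D(f)^2} = \EE{A_0(\hat t)} \leq \EE{A_0(T)} \leq \alpha^2 \EE{B_0} \leq \alpha^2.\]
The main obstacle is the first step: one must recognize that the correct symmetrization in the $r$-point setting uses $r$ (not just $2$) hypothetical test draws, producing a test-side $U$-statistic whose structure matches the $\binom{r}{2}$-term averages in~\eqref{eqn:chat_r}; once this matching is arranged, the remaining arguments are essentially those used to prove Theorem~\ref{thm:main}.
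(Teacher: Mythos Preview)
Your proposal is correct and follows essentially the same route as the paper's proof: introduce $r$ hypothetical test draws to write $\ta_\D(f)^2$ as the conditional expectation of a test-side $U$-statistic matching the calibration terms, define the symmetric oracle threshold over all $m{+}1$ units, use $A_0\le b^2/\gamma(f)^2$ to get $T\le \hat t$, and close with exchangeability of the $m{+}1$ i.i.d.\ units conditional on $\dt$, followed by the training-pair exchangeability bound $\EE{\tilde f(X_{n+1})\tilde f(X_{n+2})}\le 1$. The only cosmetic difference is indexing (you label the test unit $k=0$, the paper uses $k=m{+}1$).
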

%\zr{Do we have any remarks on this extension? Is it better/worst than the previous version?}

Although this extension also attains the same guarantee, we generally recommend the main procedure in Algorithm~\ref{alg:main} with an appropriate choice of the function space (e.g., Gaussian kernels rather than uniform kernels), in order to maximize the effective sample size $m = \lfloor n/r \rfloor$ with $r=2$ and consequently to avoid potentially conservative inference. As we illustrate through experiments in Section~\ref{sec:sim}, assigning two points to each function sample---i.e., applying the main procedure---is sufficient to tightly attain the $L^2$-coverage guarantee.

\subsection{Remark on \texorpdfstring{$L^k$}{}-coverage guarantee with a larger \texorpdfstring{$k$}{}}
In general, the $L^k$-conditional coverage guarantee~\eqref{eqn:k_moment_condition} can be attained through a similar approach, with further details provided in Appendix~\ref{sec:method_k}. However, attaining this guarantee for larger $k$ values
while maintaining prediction sets of practical width requires a substantially large sample size. 
Therefore, in many practical problems where finite-sample inference is desired, we believe that the procedure with an $L^2$-coverage guarantee is the most useful---indeed, as demonstrated in our experiments, the $L^2$-coverage already offers strong performance in controlling conditional miscoverage rates.

On the other hand, if we in fact have access to a large number of samples, one may aim for a stronger guarantee---such as the $L^5$-coverage guarantee. Heuristically, with an infinite sample size, our method can closely approximate the $L^\infty$ guarantee, $\ba(f) \leq \alpha$ for all $f \in \F$, for instance, by applying the $L^{100}$-coverage guarantee---as long as $\F$ admits a finite measure. Note the difference from the result for the guarantee in~\eqref{eqn:approx_CC}, where it has been shown that any prediction set satisfying the guarantee must have infinite width, even with an unlimited amount of data~\citep{foygel2021limits}.

Therefore, roughly speaking, achieving $\ba(f) \leq \alpha$ for all $f \in \F$ is not impossible in a distribution-free sense---for any measurable function class $\F$---, although it yields meaningful inference only in the limit of infinitely many data points. For example, $\F$ may be taken as the set of all kernel functions with varying bandwidths, ensuring that at least one $\ba(f)$ closely approximates the feature-conditional miscoverage $\alpha(x)$, except in pathological cases where $\alpha(x)$ possesses atypical or irregular structure---since with infinitely many data points, this allows accurate inference on $\ba(f)$ for all $f \in \F$, even for kernels with very small bandwidths.

While the above arguments offer a heuristic theoretical perspective on the possibility for meaningful distribution-free conditional predictive inference with a general $L^k$-coverage guarantee---approximating the $L^\infty$ guarantee---we again recommend, from a methodological perspective, the procedure based on the $L^2$-coverage guarantee because of the reasons above. Additionally, we emphasize that the $L^2$ coverage guarantee need not be viewed merely as a surrogate or relaxation of the $L^\infty$ guarantee---it serves as a meaningful conditional-miscoverage-controller in its own right, avoiding the overly strict or conservative nature of the $L^\infty$ guarantee.

\section{Simulations}\label{sec:sim}
To demonstrate the behavior of the prediction set~\eqref{eqn:chat}, we provide a series of experimental results. To provide an intuitive illustration of the $L^2$-coverage guarantee, we begin with a simple example involving a one-dimensional feature under extreme settings, showcasing the behavior of the proposed method in different scenarios. For comparison, we also provide the results from the standard split conformal prediction, which guarantees marginal, i.e., $L^1$-coverage, guarantee. We then present experimental results in a more complex setting with multivariate features and a nontrivial outcome-feature dependence structure.

\subsection{Illustration of the \texorpdfstring{$L^2$}--coverage guarantee}

We consider the following two outcome distributions:\footnote{The experimental design in this section is motivated by the simulation settings in~\citet{lee2023distribution}.} 
\begin{enumerate}
    \item Setting 1: $Y \mid X \sim \mathcal{N}(0.5 X + 0.1 X^2, 3^2)$,
    \item Setting 2: $Y \mid X \sim \mathcal{N}(0.5 X + 0.1 X^2, \sigma(X)^2)$, where $\sigma(X) = \One{X \leq 8} + 5\cdot \One{X > 8}$,
\end{enumerate}
where in both settings, the feature is drawn from the uniform distribution $X \sim \textnormal{Unif}([0,10])$. A visual representation using scatter plots is provided in Figure~\ref{fig:sim_scatter}.

\begin{figure}[ht]
    \centering
    \includegraphics[width=0.75\linewidth]{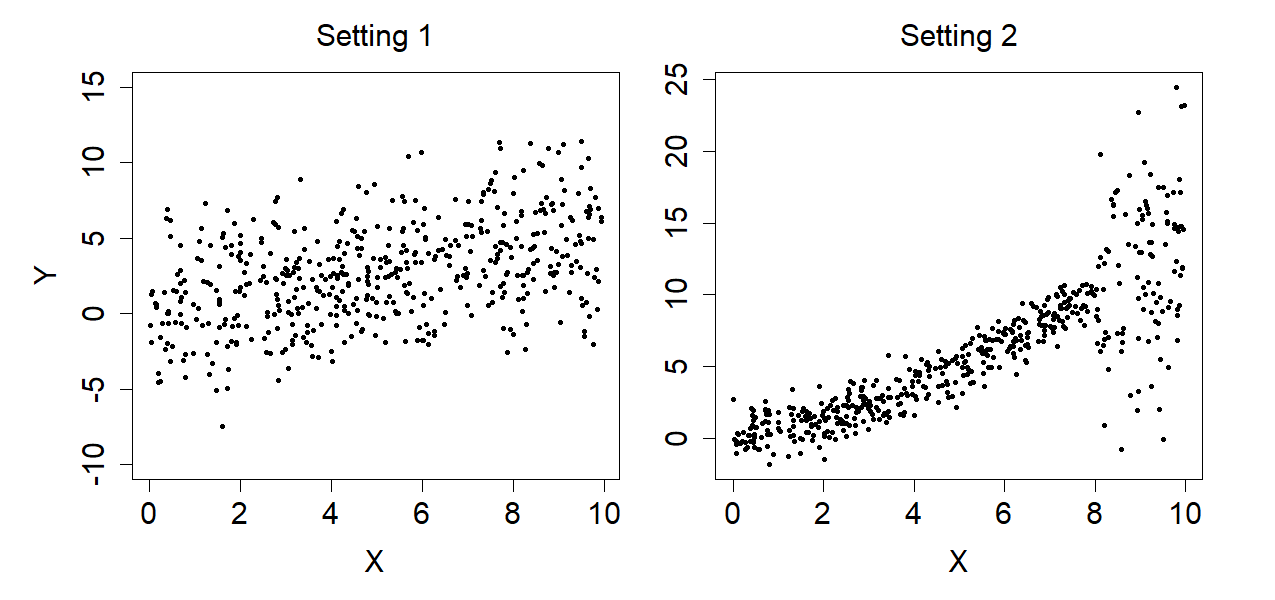}
    \caption{Scatter plots of the samples generated under Setting 1 and Setting 2.}
    \label{fig:sim_scatter}
\end{figure}

Setting 1 represents a case where the conditional variance of $Y$ given $X$ is constant, implying that the difficulty of conditional predictive inference is uniform across the feature space. In contrast, Setting 2 illustrates a scenario in which a subset of the feature space is associated with high conditional variance, making conditional predictive inference more challenging.

We compare the behavior of the split conformal prediction method and the proposed procedure using a function class consisting of Gaussian kernels with bandwidth $h = \sqrt{2}$:
\[\mathcal{F} = \{f_z : z \in \mathcal{X}\}, \quad \text{where } f_z(x) = \exp\left(-(z - x)^2/4\right),\]
and the distribution $P_f$ over $\mathcal{F}$ is defined as the distribution of $f_Z$ where $Z \sim P_X$. For the nonconformity score function, we consider the following two choices:
\begin{enumerate}
    \item \textbf{Absolute residual score:} $s(x, y) = |y - \hat{\mu}(x)|$, where $\hat{\mu}$ is fitted via linear regression.
    \item \textbf{Quantile-based score:} $s(x, y) = \max\{\hat{q}_{\alpha/2}(x) - y,\; y - \hat{q}_{1 - \alpha/2}(x)\}$, where $\hat{q}_{\alpha/2}$ and $\hat{q}_{1 - \alpha/2}$ are fitted using random forest quantile regression.
\end{enumerate}
These two choices serve as (extreme) examples of a ``bad score'' and a ``good score'', respectively, to illustrate the behavior of the procedure and its coverage guarantee under different score qualities. Specifically, the first score reflects the case of model misspecification, as well as the failure to account for heteroscedasticity in the spread of $Y$ across different values of $X$ in Setting 2, whereas the second score offers more accuracy and flexibility in this regard. We set the level as $\alpha=0.2$ in the following experiments.

We generate a training dataset of size $n_{\text{train}} = 500$ and construct the nonconformity scores. Then, we repeat the following steps 500 times: generate a calibration dataset of size $n=2000$, construct the prediction sets using the split conformal prediction and our proposed method based on Algorithm~\ref{alg:main}, and compute the conditional miscoverage rate $\alpha_\D(x)$ at a randomly drawn test point $X_{n+1}$. We note that $\alpha_\D(x)$ can be explicitly computed when the conditional distribution of $Y$ given $X$ is known, as it essentially corresponds to the conditional probability $\mathbb{P}\{Y_{n+1} \notin \ch(X_{n+1}) \mid \ch(X_{n+1})\}$. Note also that, based on the discussion in Section~\ref{sec:kernel}, $\ba_\D(f_x)$ (and consequently $\tilde \alpha_D(f_x)$) approximates $\alpha_\D(x)$, %\zr{$\ba_\D$ or $\tilde \alpha_D$?} 
and therefore, the proposed method approximately controls $\|\alpha_\D(X)\|_2$ by controlling $\|\ba_\D(f)\|_2$. The results are shown in Figure~\ref{fig:sim_residual}, ~\ref{fig:sim_quantile}, and~\ref{fig:sim_width}.

\begin{figure}[htbp]
    \centering
    \includegraphics[width=0.85\linewidth]{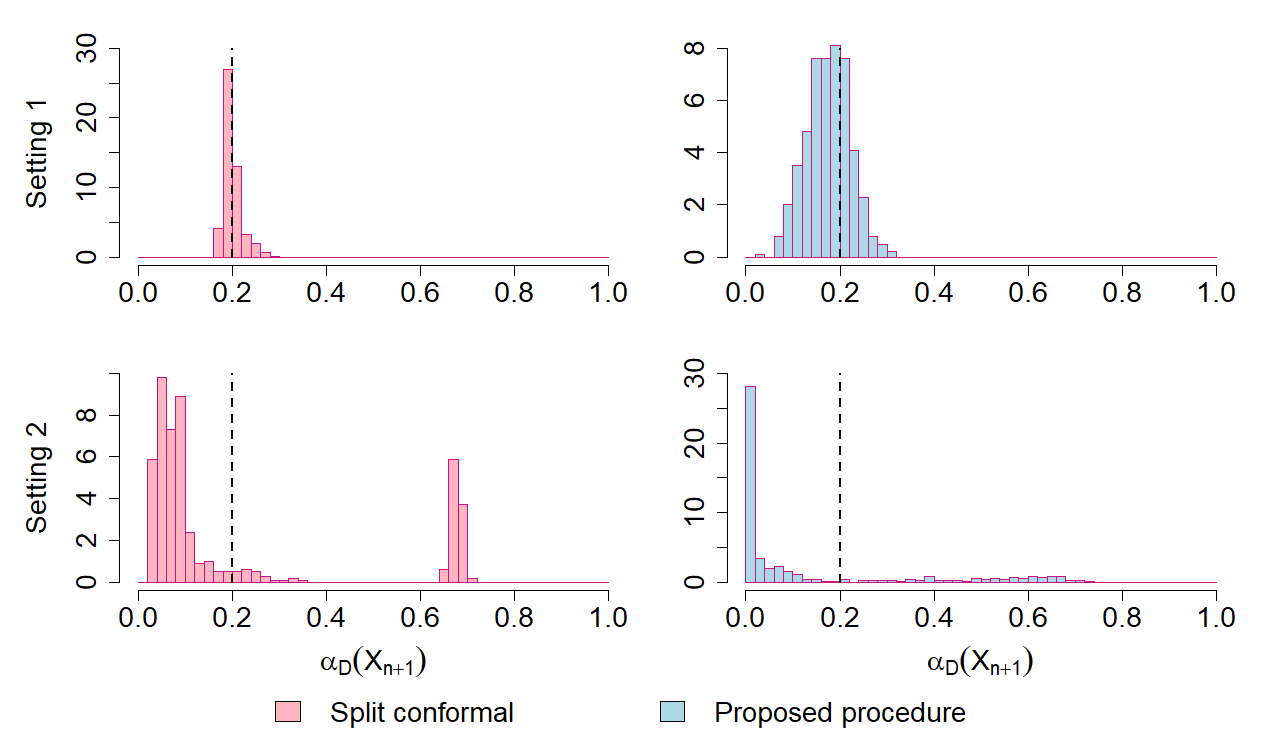}
    \caption{Conditional miscoverage rates of the prediction sets constructed using split conformal prediction and the proposed procedure with absolute residual score, under Settings 1 and 2. The dotted line corresponds to $\alpha=0.2$.}
    \label{fig:sim_residual}
\end{figure}

\begin{figure}[htbp]
    \centering
    \includegraphics[width=0.85\linewidth]{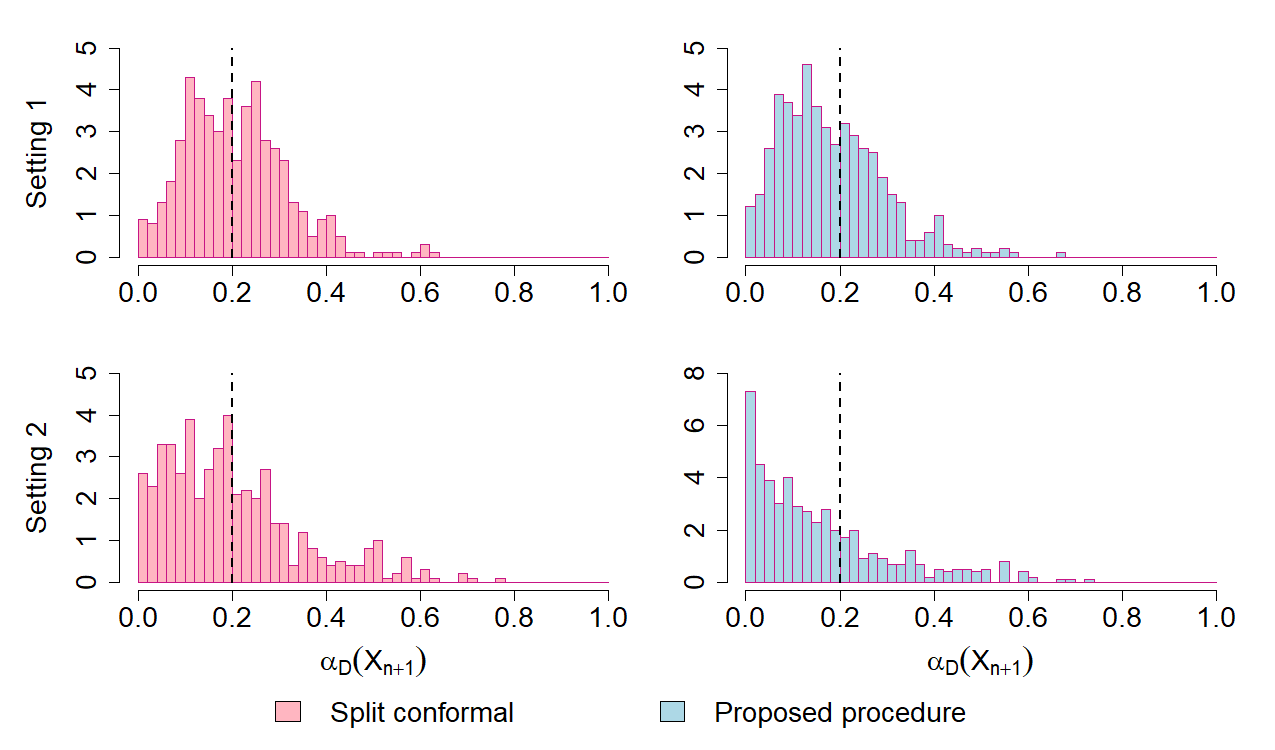}
    \caption{Conditional miscoverage rates of the prediction sets constructed using split conformal prediction and the proposed procedure with quantile-based score, under Settings 1 and 2. The dotted line corresponds to $\alpha=0.2$.}
    \label{fig:sim_quantile}
\end{figure}

\begin{figure}[htbp]
    \centering
    \includegraphics[width=0.95\linewidth]{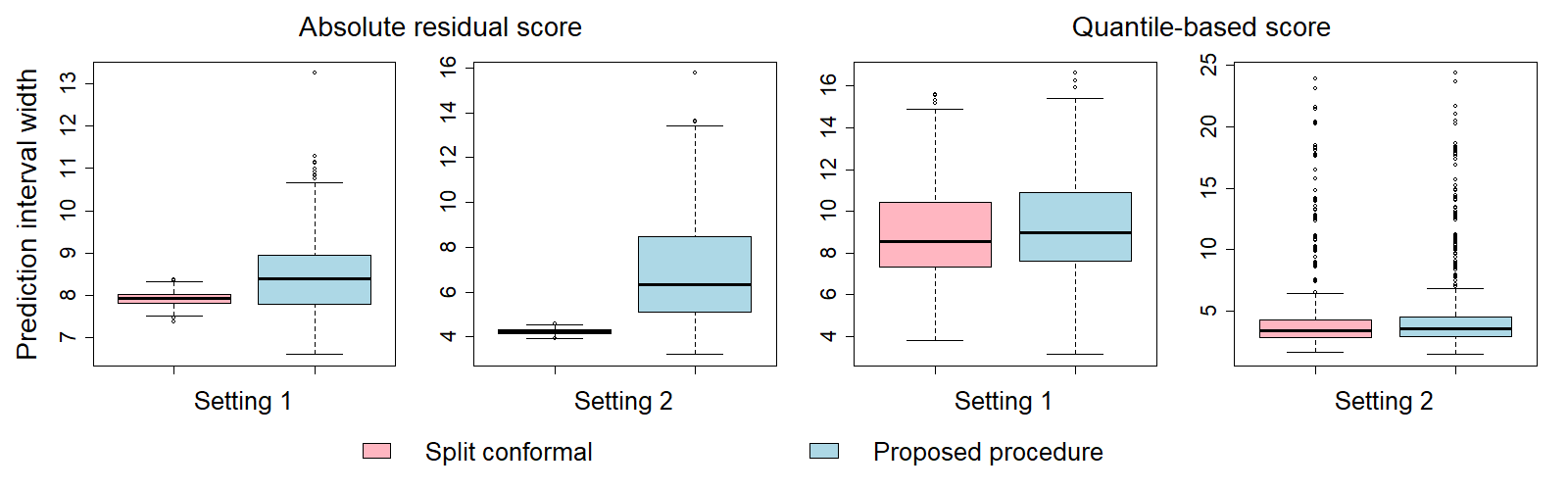}
    \caption{Widths of the prediction sets constructed using split conformal prediction and the proposed procedure with the absolute residual score and the quantile-based score, under Settings 1 and 2.}
    \label{fig:sim_width}
\end{figure}

In the case where the absolute residual score with linear regression is used (Figure~\ref{fig:sim_residual}), both split conformal prediction and the proposed method provide conditional miscoverage rates concentrated around $\alpha=0.2$ in Setting 1. However, in Setting 2, split conformal prediction fails to control the conditional miscoverage rate for a subset of points---likely those with $X > 8$. Recall that the residual score leads to a prediction set with homogeneous width across the feature space. The marginal coverage guarantee can still be satisfied even if conditional coverage is compromised for this subset, and split conformal prediction adopts this tradeoff, producing relatively narrower prediction sets. In contrast, the proposed method based on $L^2$-coverage yields smaller conditional miscoverage rates across most values of $X$ by producing wider prediction sets. 

In the experiments with the ``good score" (Figure~\ref{fig:sim_quantile}), what happens is quite different---the two methods produce similar prediction sets. The split conformal prediction no longer sacrifices conditional coverage quality for feature values in the ``hard" region $X > 8$ and achieves uniformly small conditional miscoverage rates. In this case, our proposed procedure is not forced to do much correction to achieve conditional coverage control and thus yields prediction sets similar to those of split conformal prediction. These results support that the proposed procedure does not become unnecessarily conservative, while tending to prioritize conditional miscoverage control in the tradeoff between the conditional miscoverage control and the conservativeness/prediction interval width.

In practice, when the data lie in a high-dimensional space and has a complex distributional structure, it is often challenging to determine in advance an appropriate choice of nonconformity score or evaluate the score we have. The selection of the method can thus be guided by the inferential objective: for instance, whether one prioritizes achieving better conditional miscoverage control (regardless of the quality of the score function), or whether one is satisfied with marginal coverage guarantees and prefers shorter prediction sets.

\begin{remark}
    In this section, we provided a comparison with standard conformal prediction rather than with other methods that aim for relaxed targets of conditional inference. The purpose of the experiments is to demonstrate that the proposed method performs as intended, achieving the target guarantee tightly without producing unnecessarily conservative prediction sets.

We note that the main advantage of the proposed method, relative to existing approaches on conditional predictive inference, lies in its interpretability. A meaningful empirical comparison with such methods is not straightforward, as they pursue different targets and naturally make different choices in the coverage–width tradeoff. Note that all methods output prediction sets of the form $\ch(x) = \{y : s(x,y) \leq \hat{q}\}$, where different methods yield different score bounds $\hat{q}$. Thus, for any method of this form with a score bound that is more conservative than another method (including those without theoretical guarantees), it is possible to claim based on a few plots that it provides stronger conditional coverage control. However, we do not consider this to be a particularly meaningful way of comparison, and instead view the discussions in Section~\ref{sec:multiaccuracy} as providing a more informative perspective.

In other words, the literature on distribution-free conditional predictive inference is generally not focused on ``improving a state-of-the-art methodology”---since ``improvement” is not well defined in the absence of a shared target---, and more about identifying what can and cannot be achieved in a distribution-free framework, as well as developing meaningful and interpretable conditions that are attainable.
\end{remark}

\subsection{Empirical evaluation of the proposed procedure}

We generate the data as follows:
\begin{align*}
    X \sim \mathcal{TN}(\mu,\Sigma;[0,5]^p),\quad Y \mid X \sim N\big(\beta_1^\top X + (\beta_2^\top X)^2 + \log |\beta_3^\top X|, (1+\exp(\beta_4^\top X))^2\big),
\end{align*}
where the feature dimension is set to $p = 10$, and the parameter vectors $\beta_1, \beta_2, \beta_3, \beta_4 \in \mathbb{R}^p$ are fixed in advance. Here, $\mathcal{TN}(\mu,\Sigma; A)$ denotes the truncated normal distribution with mean $\mu$, covariance matrix $\Sigma$, and truncation set $A$. In our experiment, we set $\mu = 2\mathbf{1}_p$ and $\Sigma = \mathbf{1}_p\mathbf{1}_p^\top + 2 I_p$, where $\mathbf{1}_p$ and $I_p$ denote the $p$-dimensional all-ones vector and the $p \times p$ identity matrix, respectively.

We generate a training dataset of size $500$, and fit random forest regression to construct an estimator $\hat{\mu}(\cdot)$. The nonconformity score function is then defined as $s(x, y) = |y - \hat{\mu}(x)|$. The function space used in the procedure is defined as the set of Gaussian kernels:
\begin{equation}\label{eqn:kernel_normal}
    \mathcal{F} = \{f_z : z \in \mathcal{X}\}, \quad \text{where } f_z(x) = \exp\left(-\frac{1}{2}\|(z - x)/h\|^2\right),
\end{equation}
and the experiments are repeated with varying bandwidths $h = 5, 10$, and $15$. We sample the functions as $f_{Z_1}, \cdots, f_{Z_m}$, where $Z_1,\cdots,Z_m$ are drawn i.i.d. from one of the following two distributions:
\[(1)\, Z \sim \mathcal{TN}(\mu,\Sigma; A), \qquad\qquad (2)\, Z \sim \text{Unif}([0,5]^p).\]
The first distribution reflects a setting in which unlabeled feature observations are used to construct the function samples, while the second distribution corresponds to a setting in which the function samples are constructed from a prespecified function. The latter can also be viewed as a setting in which feature observations are used under covariate shift.

We repeat the following steps 500 times: generate a calibration set of size $n = 2000$ and a test point $X_{n+1}$, apply the procedure described in Algorithm~\ref{alg:main} to construct a prediction set at levels $\alpha=0.05,0.1,0.15,0.2,0.25,0.3$, and compute the conditional miscoverage rates $\alpha_\D(X_{n+1})$ and $\ba_\D(f)$. For the computation of $\ba_\D(f)$, we generate 50,000 samples of $(X,Y)$ to estimate the quantity numerically. The results under function-sampling scheme (1) and (2) are presented in Figures~\ref{fig:sim_mult} and~\ref{fig:hist_mult}, respectively.

\begin{figure}[htbp]
    \centering
    \includegraphics[width=0.9\linewidth]{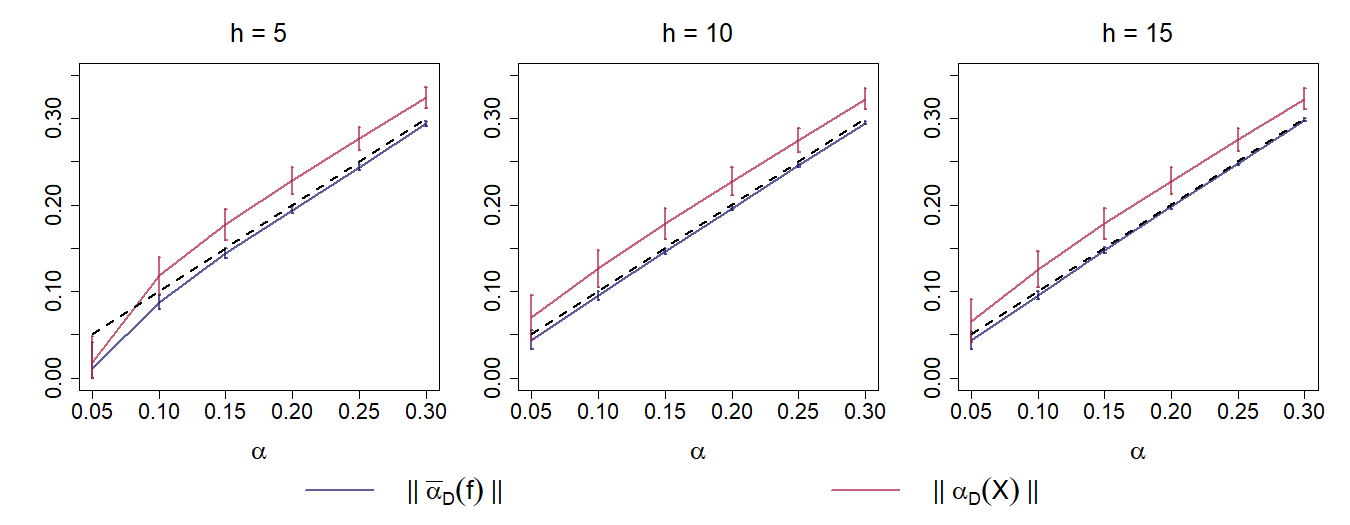}
    \caption{Plots of $\|\ba_\D(f)\|$ and $\|\alpha_\D(X)\|$ from the proposed method using Gaussian kernels with varying bandwidths and target levels, under function-sampling scheme (1). The dotted line corresponds to the line $y=x$.}
    \label{fig:sim_mult}
\end{figure}

\begin{figure}[htbp]
    \centering
    \includegraphics[width=0.85\linewidth]{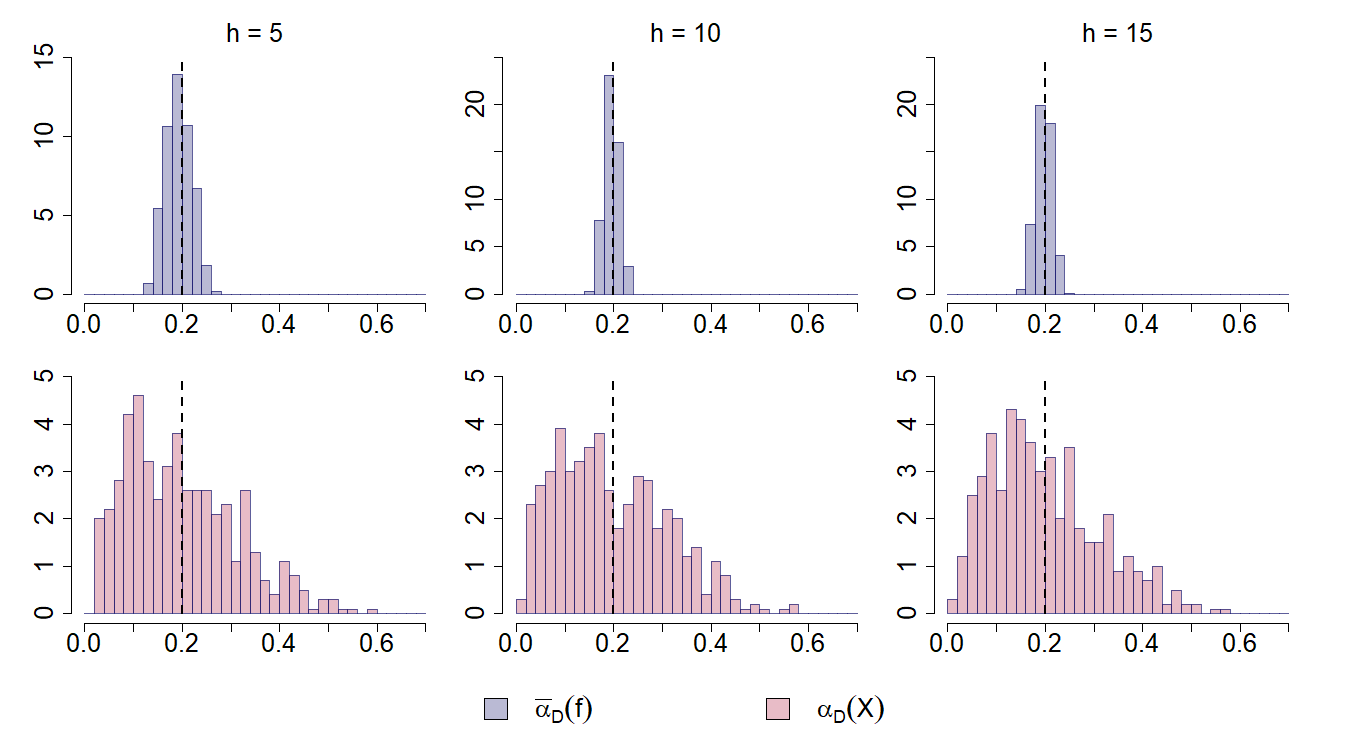}
    \caption{Histograms of $\ba_\D(f)$ and $\alpha_\D(X)$ from the proposed method with bandwidth $h=5, 10$, and $15$, under function-sampling scheme (1) with $\alpha = 0.2$. The dotted line corresponds to the line $x = 0.2$.}
    \label{fig:hist_mult}
\end{figure}

Figure~\ref{fig:sim_mult} illustrates that the proposed method tightly controls $\|\ba_\D(f)\|_2$ at the target level $\alpha$, while $\|\alpha_\D(X)\|_2$ is also approximately controlled across different bandwidth choices. Recall that $\ba_\D(f)$ can be interpreted as a kernel-weighted average of $\alpha_\D(X)$; by controlling these weighted averages, the method effectively keeps the conditional miscoverage rates small. This is also illustrated in Figure~\ref{fig:hist_mult}. The $L^2$-coverage condition causes $\ba_\D(f)$ to be distributed around the target level $\alpha$, which in turn leads to $\alpha_\D(X)$ also concentrating around $\alpha$ (to a lesser extent).

Next, Figures~\ref{fig:sim_mult_2} and~\ref{fig:hist_mult_2} present the results under sampling scheme (2), showing similar patterns to those observed under scheme (1). Note that in Figure~\ref{fig:sim_mult_2}, the $L^2$-norms $\|\ba_\D(f)\|$ and $\|\alpha_\D(X)\|$ are now computed with respect to $Z \sim \text{Unif}([0,5]^p)$ and $X \sim \text{Unif}([0,5]^p)$, respectively.

\begin{figure}[htbp]
    \centering
    \includegraphics[width=0.9\linewidth]{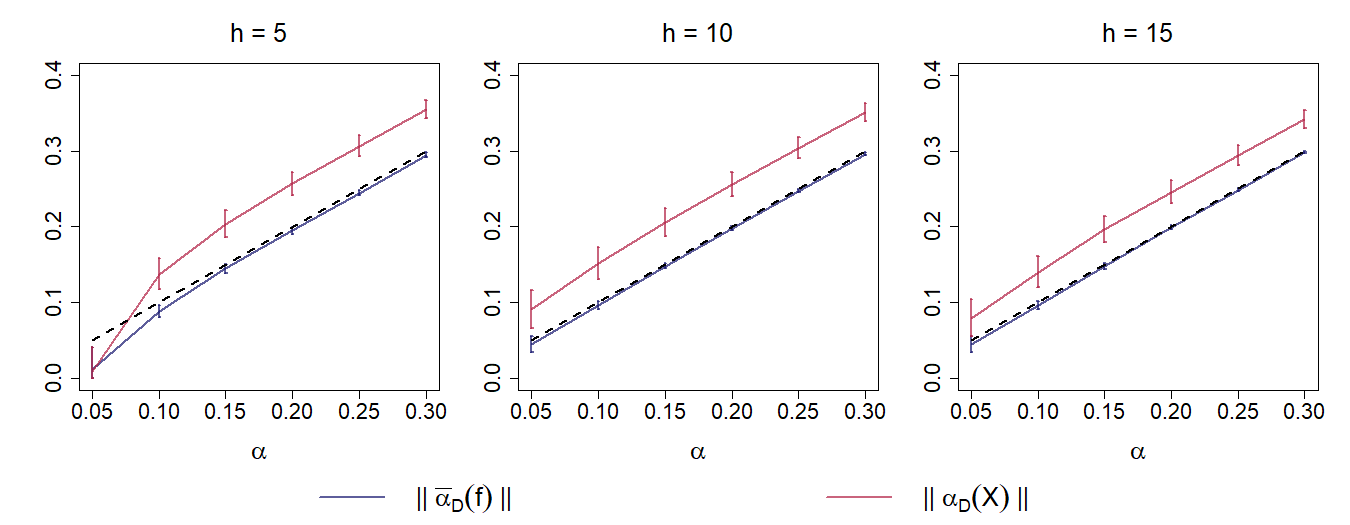}
    \caption{Plots of $\|\ba_\D(f)\|$ and $\|\alpha_\D(X)\|$ from the proposed method using Gaussian kernels with varying bandwidths and target levels, under function-sampling scheme (2). The dotted line corresponds to the line $y=x$.}
    \label{fig:sim_mult_2}
\end{figure}

\begin{figure}[htbp]
    \centering
    \includegraphics[width=0.85\linewidth]{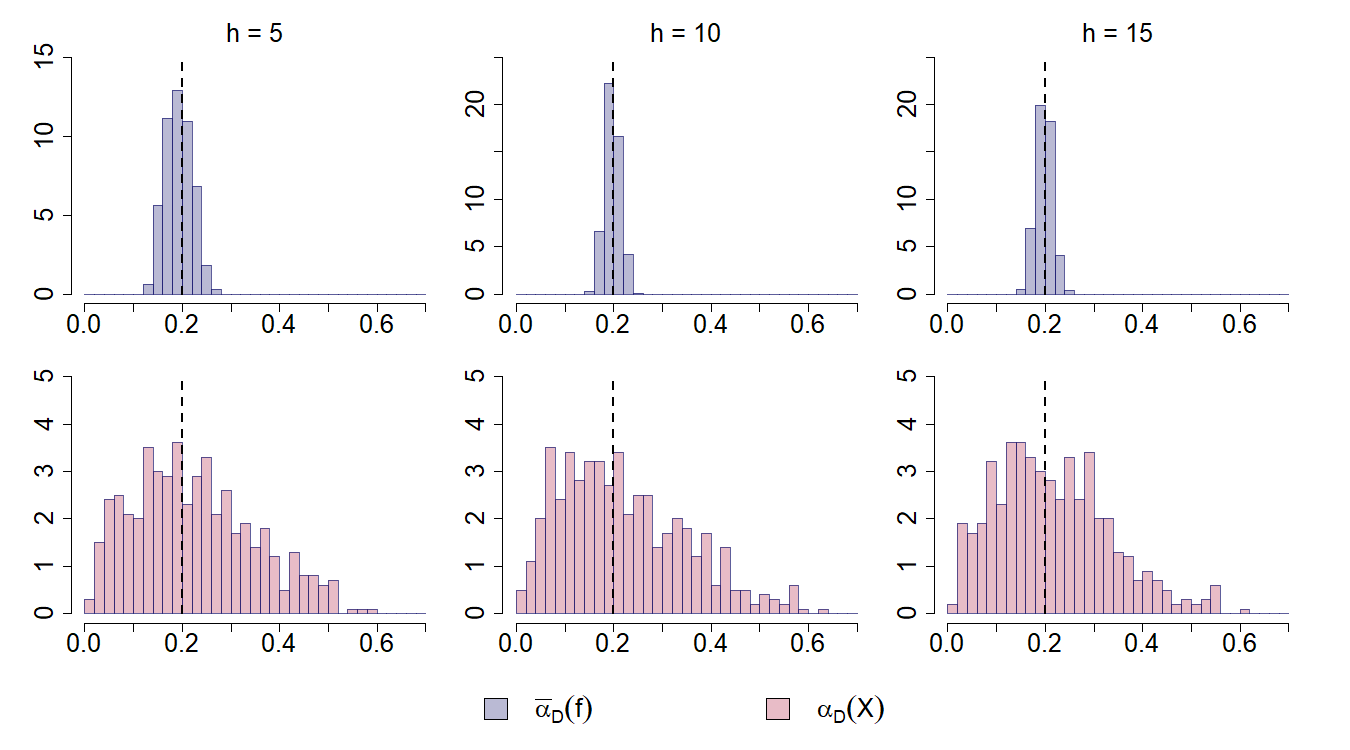}
    \caption{Histograms of $\ba_\D(f)$ and $\alpha_\D(X)$ from the proposed method with bandwidth $h=5, 10$, and $15$, under function-sampling scheme (2) with $\alpha = 0.2$. The dotted line corresponds to the line $x = 0.2$.}
    \label{fig:hist_mult_2}
\end{figure}

\section{Illustration on Abalone dataset}

We further illustrate the performance of our procedure by applying it to the Abalone dataset~\citep{nash1994population}, which was also previously applied in~\citet{hore2023conformal}. The dataset consists of 4,177 samples of abalones, along with feature information on sex and physical measurements such as length, height, and weight. The outcome variable is the number of growth rings, which represents the age of the abalones.

In the experiment, we include seven features: sex, length, diameter, height, shucked weight, viscera weight, and shell weight. We split the data into four subsets: training data $\dt$ of size 676, calibration data $\dc$ of size 2000, data used to construct the function samples $\df$ and $f$ of total size 1001, and test data $\D_\text{test}$ of size 500. As in the simulations, we use Gaussian kernels~\eqref{eqn:kernel_normal} to construct the 1001 function samples, with centers chosen from the data---we apply some rescaling to the data in advance so that distances across different features are reflected uniformly in the kernel function values. We apply random forest regression to construct an estimator $\hat{\mu}(\cdot)$ and the score function $|y - \hat{\mu}(x)|$. Using the calibration data and the function samples, we construct prediction sets at the 500 test points following Procedure~\ref{alg:main}.

To illustrate the performance of the procedure in terms of conditional coverage rates---which cannot be obtained from the realized data---we instead examine the following quantities:

\begin{enumerate}
    \item Estimate of $\ba_\D(f_{\tilde{x}})$: 
    \[\hat{\alpha}(f_{\tilde{x}}) = \frac{1}{|\D_\text{test}|}\sum_{ (x,y) \in \D_\text{test}} \tilde{f}_{\tilde{x}}(x) \One{y \notin \ch(x)},\]
    \item Estimate of local-conditional miscoverage rate: 
    \[\hat{\alpha}_\text{local}(\tilde{x}) = \frac{1}{N_{\tilde{x}}} \sum_{(x,y) \in \D_\text{test} : \|x - \tilde{x}\| \leq 1} \One{y \notin \ch(x)}, \text{ where } N_{\tilde{x}} = |\{x,y) \in \D_\text{test} : \|x - \tilde{x}\| \leq 1\}|,\]
\end{enumerate}
where the above quantities are computed for each feature observation $\tilde{x}$ in the test data $\D_\text{test}$.

Figure~\ref{fig:abalone} presents the plot of $|\ba_\D(f)|$, estimated using the values of $\hat{\alpha}(f_{\tilde{x}})$, under varying kernel bandwidths and levels. Although the estimation may not be entirely accurate due to the availability of only a single realization of the data---effectively making this a conditional $L^2$-coverage estimate---we include this plot to provide a rough illustration of the overall performance of the proposed procedure. Figure~\ref{fig:abalone} illustrates that the proposed procedure successfully controls $|\ba_\D(f)|$ at the desired level $\alpha$ across various settings. In contrast, split conformal prediction does not directly control $|\ba_\D(f)|$, but it approximately does so when the bandwidth of the kernel functions is large. This behavior is expected, as $\ba_\D(f)$ converges to the marginal miscoverage rate as the bandwidth increases, which is controlled to be less than or equal to $\alpha$ by the split conformal prediction method. 

\begin{figure}[htbp]
    \centering
    \includegraphics[width=0.85\linewidth]{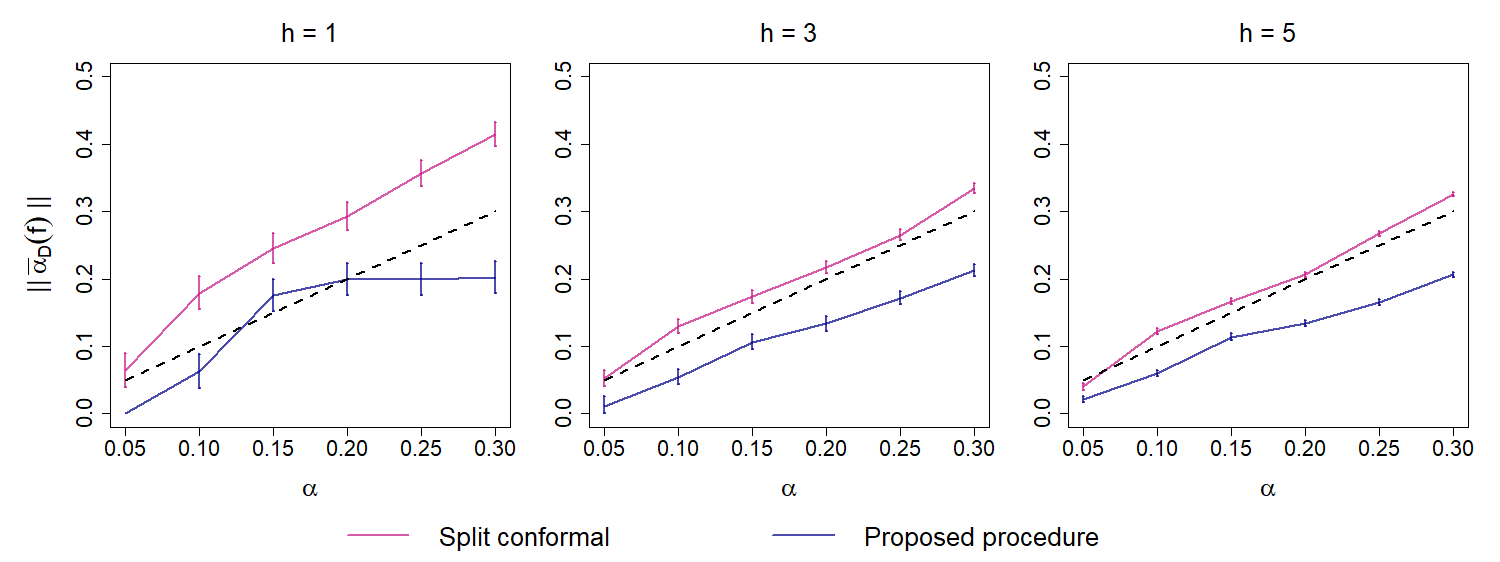}
    \caption{Results for the Abalone dataset: plots of $\|\ba_\D(f)\|_2$ from the split conformal prediction and the proposed procedure, under different kernel bandwidths and values of $\alpha$.
    }
    \label{fig:abalone}
\end{figure}

Figure~\ref{fig:abalone_hist} shows the histograms of the estimated $\ba(f)$ values from the two methods, illustrating that the proposed procedure provides stronger control over these values---the proposed procedure attains $\ba(f) \leq \alpha$ in almost all trials, while the standard conformal prediction does not. The corresponding results for the local-conditional miscoverage rate are shown in Figure~\ref{fig:abalone_local}, again illustrating stronger control by the proposed procedure. 

\begin{figure}[htbp]
    \centering
    \includegraphics[width=0.8\linewidth]{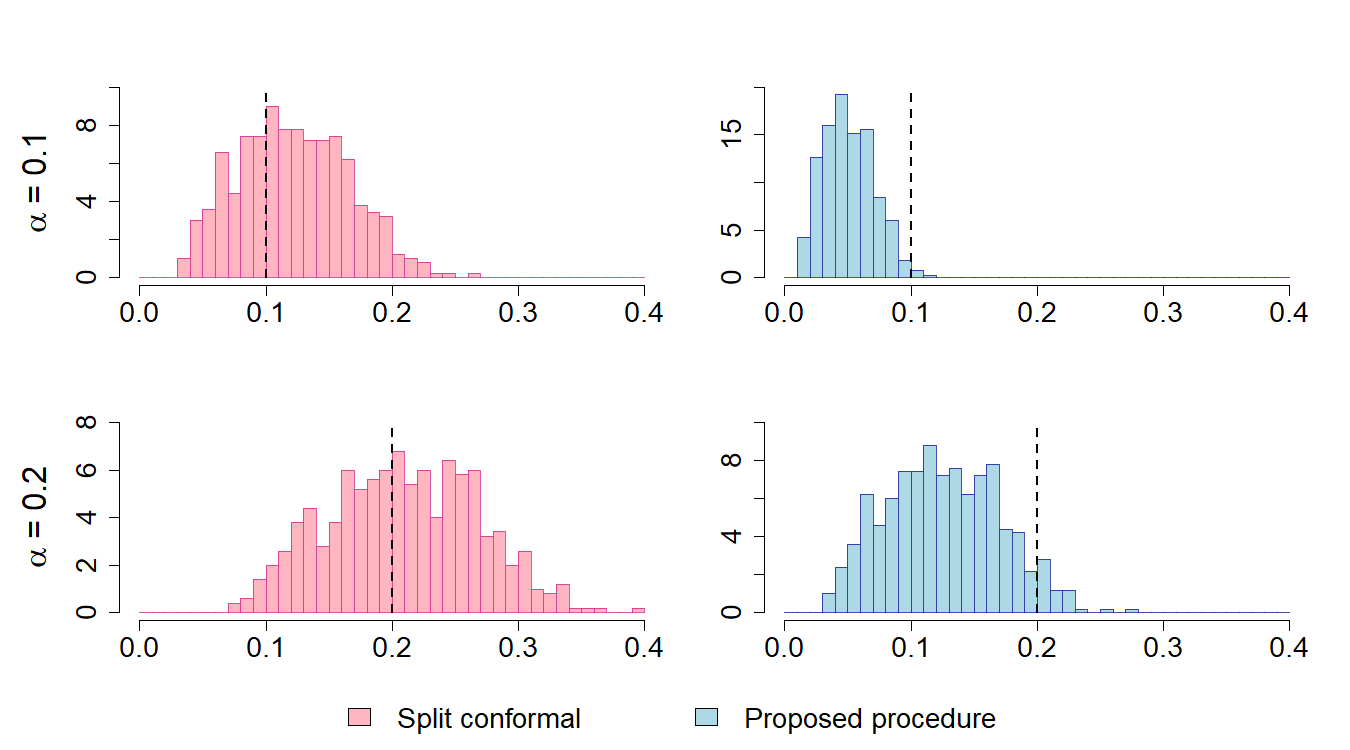}
    \caption{Results for the Abalone dataset: histograms of $\ba(f)$ from the split conformal prediction and the proposed procedure, under bandwidth $h=3$ and $\alpha=0.1, 0.2$.}
    \label{fig:abalone_hist}
\end{figure}

\begin{figure}[htbp]
    \centering
    \includegraphics[width=0.8\linewidth]{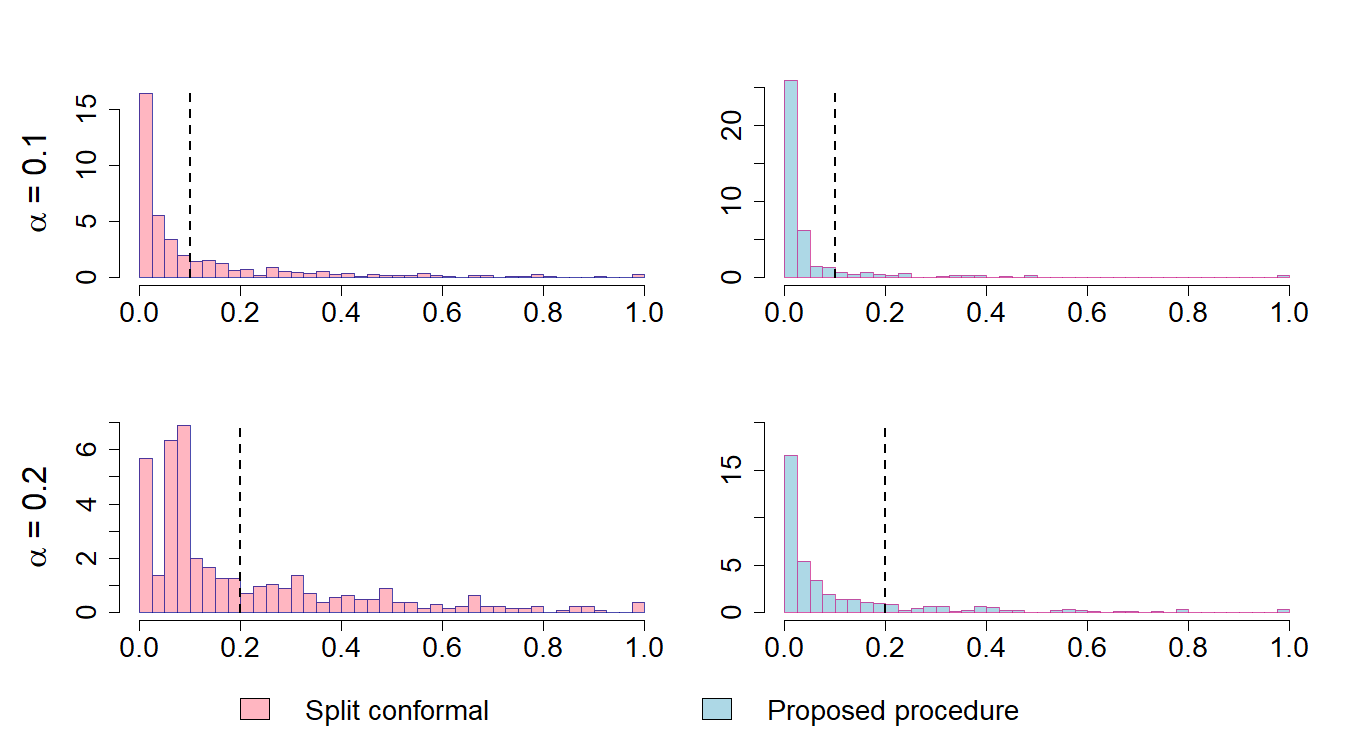}
    \caption{Results for the Abalone dataset: histograms of the local-conditional miscoverage rate for the split conformal prediction and the proposed procedure, under bandwidth $h=3$ and $\alpha = 0.1, 0.2$.}
    \label{fig:abalone_local}
\end{figure}

\section{Discussion}

In this work, we present a formulation for distribution-free conditional predictive inference that is both simple and broadly applicable. We introduce an inferential target—the $L^k$-coverage guarantee—which appears to be a near-minimal relaxation of the strict conditional coverage requirement that is otherwise unachievable, and we develop a procedure that attains this guarantee with exact finite-sample validity under no distributional assumptions. Empirical results support that our procedure provides effective control of conditional coverage rates, while avoiding to be unnecessarily conservative.

The ideas introduced in this work---including viewing conditional coverage as a function or random variable, and employing artificial hierarchical structures through sample grouping---may have broader applications to inference problems across various contexts. Extensions to non-i.i.d. data or structured data may be of potential interest, as well as inference on multiple outcomes with conditional guarantees. We leave these questions to future work.

\subsection*{Acknowledgment}
Z.R.~is supported by NSF grant DMS-2413135 and Wharton Analytics. Y.L.~is supported in part by 
NIH R01-AG065276, R01-GM139926, NSF 2210662, P01-AG041710, R01-CA222147, 
ARO W911NF-23-1-0296,
NSF 2046874, ONR N00014-21-1-2843, and the Sloan Foundation.

\bibliographystyle{plainnat}
\bibliography{bib}

\newpage
\appendix

\section{Equivalence of the coverage conditions}\label{sec:multi_acc}
Here, we provide the proof for the equivalence between conditions (i) and (ii) in Equation~\eqref{eqn:multi_acc_ineq}.

\begin{proposition}\label{prop:equiv}
    The conditions (i) and (ii) in Equation~\eqref{eqn:multi_acc_ineq} are equivalent.
\end{proposition}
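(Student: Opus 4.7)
The plan is to prove the two implications separately, treating (i)$\Rightarrow$(ii) as the straightforward direction and (ii)$\Rightarrow$(i) via a contrapositive argument with a carefully chosen test function.

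First I would handle (i)$\Rightarrow$(ii). Assuming (i), the random variable $\alpha(X_{n+1}) - \alpha$ is nonpositive almost surely. Since any $f \in \mathcal{F}$ satisfies $f \geq 0$, the product $f(X_{n+1})(\alpha(X_{n+1}) - \alpha)$ is nonpositive almost surely as well. Taking expectation preserves this inequality, yielding (ii). This step is essentially a one-line calculation.

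The less trivial direction is (ii)$\Rightarrow$(i), which I would prove by contraposition. Suppose (i) fails, so there exists a measurable set $A \subseteq \mathcal{X}$ with $P_X(A) > 0$ such that $\alpha(x) > \alpha$ for every $x \in A$. Concretely I can take $A = \{x \in \mathcal{X} : \alpha(x) > \alpha\}$, which is measurable because $\alpha(\cdot)$ is a measurable function (being a regular conditional probability). The negation of (i) is exactly that $P_X(A) > 0$. Now set $f = \mathbbm{1}_A$, which is nonnegative and measurable, hence admissible in (ii). Then
\[
\mathbb{E}\bigl[f(X_{n+1})(\alpha(X_{n+1}) - \alpha)\bigr] = \int_A (\alpha(x) - \alpha)\, dP_X(x) > 0,
\]
where the strict positivity follows because the integrand is strictly positive on $A$ and $P_X(A) > 0$. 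This contradicts (ii), completing the contrapositive.

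The main obstacle, if any, is a minor measurability subtlety: one needs $\alpha(\cdot)$ to be measurable so that $A$ is a valid measurable set and $\mathbbm{1}_A$ is an allowed choice of $f$. This is standard since $\alpha(x) = \mathbb{P}\{Y_{n+1} \notin \widehat{C}(X_{n+1}) \mid X_{n+1} = x\}$ is a version of the conditional probability, which is measurable by construction. Beyond this routine check, no further technicalities arise, and the argument is essentially a direct application of the fact that a signed measurable function is nonpositive almost surely if and only if its integral against every nonnegative test function is nonpositive.
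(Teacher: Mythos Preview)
Your proposal is correct and follows essentially the same contrapositive-with-indicator approach as the paper. The only cosmetic difference is that the paper makes the ``strictly positive integrand on a positive-measure set has positive integral'' step explicit by passing to the subset $\{x:\alpha(x)\ge \alpha+1/K\}$ for a suitable $K$, whereas you invoke this standard fact directly; both are fine.
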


\begin{proof}[Proof of Proposition~\ref{prop:equiv}]
    It is trivial that the condition (i) implies the condition (ii). Now suppose that (i) does not hold, meaning that $\delta := \PP{\alpha(X_{n+1}) > \alpha} > 0$. 
    Now define the events
    \[E_k = \left\{\alpha(X_{n+1}) \geq \alpha + \frac{1}{k}\right\}, \qquad \text{ for } k = 1,2,\cdots.\]
Since $(E_k)_{k=1,2 \cdots}$ is an increasing sequence of events and $\cup_{k=1}^\infty E_k = \{\alpha(X_{n+1}) > \alpha\}$, we have $\lim_{k \rightarrow \infty} \PP{E_k} = \delta > 0$, implying that there exists a $K$ such that $\PP{E_K} > 0$. Letting $f_K = \mathbbm{1}_{\{x : \alpha(x) \geq \alpha + 1/K\}}$, we have
\begin{multline*}
\EE{f_K(X_{n+1})\left(\alpha(X_{n+1}) - \alpha\right)} = \EE{\One{\alpha(X_{n+1}) \geq \alpha + \tfrac{1}{K}}\cdot\left(\alpha(X_{n+1}) - \alpha\right)}\\
\geq \EE{\One{\alpha(X_{n+1}) \geq \alpha + \tfrac{1}{K}}\cdot \tfrac{1}{K}} = \tfrac{1}{K}\cdot\PP{E_K} > 0,
\end{multline*}
contradicting condition (ii). This proves that (ii) implies (i), and the claim is proved.

\end{proof}

\section{Procedure with \texorpdfstring{$L^k$}{}-coverage guarantee}\label{sec:method_k}

Here, we provide the details of the method with $L^k$-coverage guarantee, which can be obtained through a direct extension of procedure~\ref{alg:main} with the $L^2$-coverage guarantee. 
Suppose we have datasets $\dt$, $\dc$, and the function space $\F$ with a distribution $P_f$ and a uniform bound $b$. We first construct a nonconformity score function $s : \X \times \Y \rightarrow \R^+$, using the training data $\dt$. Then let $n_\text{train} = m_\text{train}k + r_\text{train}$, where $0 \leq r_\text{train} \leq k-1$. For each $f \in \F$, we define
\[\gamma(f) = \left(\frac{1}{\mt+1}\left( \sum_{l=1}^{\mt} \prod_{j=1}^k f(X_{(l-1)k+j}') + b^k\right)\right)^{1/k}.\]
Next, we write $n = mk+r$, where $0 \leq r \leq k-1$, and draw function samples $f_1,\cdots,f_m \iidsim P_f$ and $f \sim P_f$. Then we construct the prediction set as
\begin{equation}\label{eqn:chat_k}
\begin{split}
    &\ch(x) = \{y \in \Y : s(x,y) \leq \hat{t}\}, \text{ where }\\
    &\hat{t} = \min\left\{t \geq 0 : \frac{\sum_{i=1}^m  \prod_{j=1}^k \tilde{f}_i(X_{(i-1)k+j}) Z_{(i-1)k+j}^t+\frac{b^k}{\gamma(f)^k}}{\sum_{i=1}^m \prod_{j=1}^k \tilde{f}_i(X_{(i-1)k+j})} \leq \alpha^k\right\},
\end{split}
\end{equation}
where $\tilde{f}_i = f_i / \gamma(f)$ for $i \in [m]$.

\begin{theorem}\label{thm:cov_k}
    The prediction set $\ch(X_{n+1})$ given as~\eqref{eqn:chat_k} satisfies
    \[\|\ta_\D(f)\|_k \leq \alpha, \text{ where } \ta_\D(f) = \EEst{\frac{f(X_{n+1})}{\gamma(f)} \cdot \alpha_\D(X_{n+1})}{\D}.\]
\end{theorem}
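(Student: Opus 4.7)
The plan is to follow the $L^2$-case argument for Theorem~\ref{thm:main}, adapted to groups of size $k$. The goal is to show $\EE{\ta_\D(f)^k} \leq \alpha^k$. The first step is to introduce $k$ independent copies $(X_{n+j}, Y_{n+j})_{j=1}^k$ of the test point, drawn iid from $P_{X,Y}$ and independent of $\D$. Raising $\ta_\D(f)$ to the $k$-th power and using the tower property together with $\alpha_\D(x) = \PPst{Y \notin \ch(X)}{X=x, \D}$, I would unfold
\[\EE{\ta_\D(f)^k} = \EE{\frac{\prod_{j=1}^k f(X_{n+j})}{\gamma(f)^k}\prod_{j=1}^k\One{Y_{n+j} \notin \ch(X_{n+j})}}.\]

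Next, I would introduce an augmented threshold $\hat{t}^+$ obtained by rewriting the defining condition of $\hat{t}$ to include a hypothetical $(m+1)$-th group $(X_{n+1}, Y_{n+1}, \ldots, X_{n+k}, Y_{n+k})$ with function $f_{m+1} := f$, treating all $m+1$ groups symmetrically and dropping the $b^k/\gamma(f)^k$ term. Using $\prod_{j=1}^k(f(X_{n+j})/\gamma(f)) \leq b^k/\gamma(f)^k$, a short algebraic check shows that the original condition at any $t$ implies the augmented condition, so $\hat{t}^+ \leq \hat{t}$ and hence $\ch^+(x) \subseteq \ch(x)$. This lets me upper-bound the right-hand side above by $\EE{G^+_{m+1}}$, where $G^+_i := \prod_{j=1}^k \tilde{f}^+_i(X^+_{(i-1)k+j})Z^{\hat{t}^+}_{(i-1)k+j}$ and $\tilde{f}^+_i = f^+_i/\gamma(f^+_i)$.

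Because $\hat{t}^+$ is a symmetric function of the $m+1$ iid tuples $(f^+_i, X^+_{(i-1)k+1}, Y^+_{(i-1)k+1}, \ldots, X^+_{ik}, Y^+_{ik})$, the $G^+_i$ are identically distributed. Using the defining inequality of $\hat{t}^+$ (which holds tautologically when $\hat{t}^+ < \infty$, and is trivial otherwise since then $G^+_i = 0$),
\[\EE{G^+_{m+1}} = \frac{1}{m+1}\EE{\sum_{i=1}^{m+1}G^+_i} \leq \frac{\alpha^k}{m+1}\EE{\sum_{i=1}^{m+1}\prod_{j=1}^k\tilde{f}^+_i(X^+_{(i-1)k+j})} = \alpha^k\,\EE{\prod_{j=1}^k \tilde{f}_1(X_j)}.\]

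The remaining task is to show $\EE{\prod_{j=1}^k f_1(X_j)/\gamma(f_1)^k} \leq 1$, which is precisely where the choice of $\gamma$ matters. Conditional on $f_1$, the $\mt$ training $k$-tuples together with the calibration group $(X_1,\ldots,X_k)$ form $\mt+1$ iid $k$-tuples; since $\prod_{j=1}^k f_1(X_j) \leq b^k$, we have $(\mt+1)\gamma(f_1)^k \geq \sum_{l=1}^{\mt+1}\prod_{j=1}^k f_1(X^*_{(l-1)k+j})$ with the $(\mt+1)$-th tuple being the calibration one, and the standard rank/exchangeability identity then gives $\EEst{\prod_j f_1(X_j)/((\mt+1)\gamma(f_1)^k)}{f_1} \leq 1/(\mt+1)$. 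Chaining these bounds yields $\EE{\ta_\D(f)^k} \leq \alpha^k$. The main obstacle is handling $\gamma(f)$ as a sample-based estimate rather than the true $\EEst{f(X)}{f}$: the $b^k$ term in the definition of $\gamma(f)^k$ is calibrated precisely to serve as the surrogate for a hypothetical $(\mt+1)$-th training contribution, making the second exchangeability argument close cleanly without losing a constant.
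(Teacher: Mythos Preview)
Your proposal is correct and follows essentially the same route as the paper's proof of Theorem~\ref{thm:main} (the paper omits a separate proof of Theorem~\ref{thm:cov_k}, noting it is the direct $k$-fold analogue). The key steps---introducing $k-1$ extra hypothetical test copies, defining the symmetric oracle threshold $\hat t^+$ over $m+1$ groups, the monotonicity comparison $\hat t^+\le \hat t$ via $\prod_j \tilde f(X_{n+j})\le b^k/\gamma(f)^k$, the exchangeability average, and the final $\gamma(f)^k$ bound using the $b^k$ term as a surrogate $(\mt+1)$-th training contribution---all match the paper's argument.
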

We omit the proof of Theorem~\ref{thm:cov_k} since it follows from applying the same arguments from Theorem~\ref{thm:main}. While the above procedure provides a general recipe for achieving the $L^k$-coverage guarantee which provides a theoretically stronger control over the conditional miscoverage rates for larger $k$, we present and recommend procedure~\ref{alg:main} with the $L^2$-coverage guarantee as the primary methodology for the following reasons:
\begin{enumerate}
    \item The prediction set~\eqref{eqn:chat_k} tends to be conservative unless the sample size is substantially large. To see that, observe that in the definition of $\hat{t}$, we are essentially comparing a ratio of the form $\sum_{i=1}^m w_i \cdot (\text{product of $Z_i$'s}) / \sum_{i=1}^m w_i$ to $\alpha^k$---roughly speaking, we are comparing a ratio of order $1/m$ with $\alpha^k$, and unless $m$ is very large, it leads to a trivial prediction set.

    \item The $L^2$-coverage guarantee already provides good control of the conditional coverage rates. This is supported by the arguments presented in Section~\ref{sec:moment} and the empirical results in Section~\ref{sec:sim}.
\end{enumerate}

\section{Proof of Theorems}

\subsection{Proof of Theorem~\ref{thm:main}}
\label{appd:proof_main}
Let $(X_{n+2},Y_{n+2})$ be another (hypothetical) sample from $P_{X,Y}$, drawn independently of the dataset, and then define $Z_{n+j}^t = \One{Y_{n+j} \notin C_t(X_{n+j})}$ for $j=1,2$ and $t \geq 0$. Recall that $\D = (\dt, \dc, \df, f)$.
We have
\begin{align*}
    &\ta_\D(f)^2 = \EEst{\frac{f(X_{n+1})}{\gamma(f)} \cdot \alpha_\D(X_{n+1})}{\D}^2\\
    &= \frac{1}{\gamma(f)^2}\cdot \EEst{f(X_{n+1}) \cdot \alpha_\D(X_{n+1})}{\D}^2\hspace{25mm}\textnormal{ since $\gamma$ is a function of $f$, $\dt$, and $\df$}\\
    &= \frac{1}{\gamma(f)^2}\cdot \EEst{f(X_{n+1}) \cdot \alpha_\D(X_{n+1})}{\D}\cdot\EEst{f(X_{n+2}) \cdot \alpha_\D(X_{n+2})}{\D}\\
    &= \frac{1}{\gamma(f)^2}\cdot \EEst{f(X_{n+1}) \cdot  Z_{n+1}^{\hat{t}}}{\D}\cdot\EEst{f(X_{n+2}) \cdot  Z_{n+2}^{\hat{t}}}{\D}\\
    &\hspace{38mm}\textnormal{by the definition of $\alpha_\D$ and the fact that $\hat{t}$ is a function of $\D$}\\
    &= \frac{1}{\gamma(f)^2}\cdot \EEst{f(X_{n+1})f(X_{n+2}) \cdot Z_{n+1}^{\hat{t}}Z_{n+2}^{\hat{t}}}{\D}\\
    &\hspace{50mm}\textnormal{ since $(X_{n+1},Y_{n+1}), (X_{n+2},Y_{n+2})$, and $\D$ are mutually independent}\\
    &= \EEst{\tilde{f}(X_{n+1})\tilde{f}(X_{n+2}) \cdot Z_{n+1}^{\hat{t}}Z_{n+2}^{\hat{t}}}{\D},
\end{align*}
where $\tilde{f} = f/\gamma(f)$. Next, define $\tilde{t}$ by
\[\tilde{t} = \min\left\{t \geq 0 : \frac{\sum_{i=1}^{m+1}  \tilde{f}_i(X_{2i-1})\tilde{f}_i(X_{2i})Z_{2i-1}^tZ_{2i}^t}{\sum_{i=1}^{m+1} \tilde{f}_i(X_{2i-1})\tilde{f}_i(X_{2i})} \leq \alpha^2\right\},\]
where, for convenience, we write $\tilde{f}_{m+1} = \tilde{f}$. The value $\tilde{t}$ can be viewed as an `oracle' cutoff that depends on $(X_{2m+1},Y_{2m+1})$ and $(X_{2m+2},Y_{2m+2})$, which are not observed. Note that by definition, $\hat{t} \geq \tilde{t}$, and thus by the monotonicity of $C_t$---i.e., $C_t(x) \subset C_{t'}(x)$ if $t \leq t'$---, it follows that $Z_i^{\hat{t}} \leq Z_i^{\tilde{t}}$ for all $i \in [n+2]$. Therefore, from the above result, we have
\begin{align*}
    &\EEst{\ta_\D(f)^2}{\dt} = \EEst{\EEst{\tilde{f}(X_{n+1})\tilde{f}(X_{n+2}) \cdot Z_{n+1}^{\hat{t}}Z_{n+2}^{\hat{t}}}{\D}}{\dt}\\
    &\leq \EEst{\EEst{\tilde{f}(X_{n+1})\tilde{f}(X_{n+2}) \cdot Z_{n+1}^{\tilde{t}}Z_{n+2}^{\tilde{t}}}{\D}}{\dt}\\
    &=\EEst{\tilde{f}(X_{n+1})\tilde{f}(X_{n+2}) \cdot Z_{n+1}^{\tilde{t}}Z_{n+2}^{\tilde{t}}}{\dt}\\
    &= \EEst{\frac{1}{m+1}\cdot \sum_{i=1}^{m+1}\tilde{f}_i(X_{2i-1})\tilde{f}_i(X_{2i}) \cdot Z_{n+1}^{\tilde{t}}Z_{n+2}^{\tilde{t}}}{\dt},
\end{align*}
where the last equality follows from the exchangeability of $f_1, \cdots, f_m, f_{m+1} = f$ and the invariance of $\tilde{t}$ under arbitrary permutations of $(f_i, X_{2i-1}, X_{2i})_{i \in [m]}$. From the definition of $\tilde{t}$, we then have---where we exclude the case where $\sum_{i=1}^{m+1} \tilde{f}_i(X_{2i-1})\tilde{f}_i(X_{2i}) = 0$ and consequently $\hat{t} = \tilde{t} = \infty$, in which case the guarantee holds trivially---
\begin{align*}
    \EEst{\ta_\D(f)^2}{\dt} \leq \alpha^2 \cdot \EEst{\frac{1}{m+1}\cdot \sum_{i=1}^{m+1} \tilde{f}_i(X_{2i-1})\tilde{f}_i(X_{2i})}{\dt} = \alpha^2\cdot \EEst{\tilde{f}(X_{n+1})\tilde{f}(X_{n+2})}{\dt},
\end{align*}
and thus, by marginalizing with respect to $\dt$, we obtain
\begin{multline}\label{eqn:pf_1}
    \EE{\ta_\D(f)^2} \leq \alpha^2 \cdot \EE{\tilde{f}(X_{n+1})\tilde{f}(X_{n+2})} 
    = \alpha ^2 \cdot \EE{f(X_{n+1})f(X_{n+2}) / \gamma(f)^2}\\
    = \alpha^2 \cdot \EE{\frac{(\mt+1)\cdot f(X_{n+1})f(X_{n+2})}{\sum_{l=1}^{\mt} f(X_{2l-1}') f(X_{2l}') + b^2}} \leq \alpha^2 \cdot \EE{\frac{(\mt+1)\cdot f(X_{n+1})f(X_{n+2})}{\sum_{l=1}^{\mt} f(X_{2l-1}') f(X_{2l}') + f(X_{n+1})f(X_{n+2})}} = \alpha^2,
\end{multline}
where the last equality follows from the exchangeability of $(X_1',X_2'), \dots, (X_{2\mt-1}, X_{2\mt}), (X_{n+1}, X_{n+2})$, which are i.i.d. samples from $P_X \times P_X$.

\subsection{Proof of Theorem~\ref{thm:generalized}}

Let us define $(X_{m+1,1},Y_{m+1,1}) = (X_{n+1}, Y_{n+1})$ and consider hypothetical samples $(X_{m+1,2},Y_{m+1,2}), \cdots$, $(X_{m+1,r},Y_{m+1,r})$ $\iidsim P_X \times P_{Y \mid X}$. Then we have
\begin{align*}
    \ta_\D(f)^2 &= \EEst{\tilde{f}(X_{m+1,1})\tilde{f}(X_{m+1,2}) \cdot Z_{m+1,1}^{\hat{t}}Z_{m+1,2}^{\hat{t}}}{\D} \qquad \text{ by the result in the proof of Theorem~\ref{thm:main}}\\
    &= \frac{1}{\binom{r}{2}}\sum_{1 \leq j_1 < j_2 \leq r} \EEst{\tilde{f}(X_{m+1,j_1})\tilde{f}(X_{m+1,j_2}) \cdot Z_{m+1,j_1}^{\hat{t}}Z_{m+1,j_2}^{\hat{t}}}{\D}\\
    &= \EEst{\frac{1}{\binom{r}{2}}\sum_{1 \leq j_1 < j_2 \leq r}\tilde{f}(X_{m+1,j_1})\tilde{f}(X_{m+1,j_2}) \cdot Z_{m+1,j_1}^{\hat{t}}Z_{m+1,j_2}^{\hat{t}}}{\D}
\end{align*}
where $Z_{m+1,j}^t = \One{s(X_{m+1,j},Y_{m+1,j}) > t}$ for $j \in [r]$, and $t \geq 0$. Now, define
\[\tilde{t} = \min\left\{t \geq 0 : \frac{\sum_{k=1}^{m+1} \frac{1}{\binom{r}{2}} \sum_{1 \leq j_1 < j_2 \leq r} \tilde{f}_k(X_{k,j_1})\tilde{f}_k(X_{k,j_2})Z_{k,j_1}^t Z_{k,j_2}^t}{\sum_{k=1}^{m+1} \frac{1}{\binom{r}{2}} \sum_{1 \leq j_1 < j_2 \leq r} \tilde{f}_k(X_{k,j_1})\tilde{f}_k(X_{k,j_2})} \leq \alpha^2\right\},\]
where $f_{m+1} = f$. Then $\hat{t} \geq \tilde{t}$ holds deterministically, and thus we have
\begin{align*}
    &\EEst{\ta_\D(f)^2}{\dt} \leq \EEst{\EEst{\frac{1}{\binom{r}{2}}\sum_{1 \leq j_1 < j_2 \leq r}\tilde{f}(X_{m+1,j_1})\tilde{f}(X_{m+1,j_2}) \cdot Z_{m+1,j_1}^{\tilde{t}}Z_{m+1,j_2}^{\tilde{t}}}{\D}}{\dt}\\
    &= \EEst{\frac{1}{\binom{r}{2}}\sum_{1 \leq j_1 < j_2 \leq r}\tilde{f}(X_{m+1,j_1})\tilde{f}(X_{m+1,j_2}) \cdot Z_{m+1,j_1}^{\tilde{t}}Z_{m+1,j_2}^{\tilde{t}}}{\dt}\\
    &= \frac{1}{m+1} \sum_{k=1}^{m+1} \EEst{\frac{1}{\binom{r}{2}}\sum_{1 \leq j_1 < j_2 \leq r}\tilde{f}_k(X_{k,j_1})\tilde{f}_k(X_{k,j_2}) \cdot Z_{k,j_1}^{\tilde{t}}Z_{k,j_2}^{\tilde{t}}}{\dt}\\
    &= \EEst{\frac{1}{m+1} \sum_{k=1}^{m+1} \frac{1}{\binom{r}{2}}\sum_{1 \leq j_1 < j_2 \leq r}\tilde{f}_k(X_{k,j_1})\tilde{f}_k(X_{k,j_2}) \cdot Z_{k,j_1}^{\tilde{t}}Z_{k,j_2}^{\tilde{t}}}{\dt},
\end{align*}
where we apply the fact that $\tilde{t}$ is a symmetric function of $(X_{k,1},\cdots,X_{k,r})_{k \in [m+1]}$. Then, by the definition of $\tilde{t}$, it follows that
\begin{align*}
    \EEst{\ta_\D(f)^2}{\dt} &\leq \alpha^2 \cdot \EEst{\frac{1}{m+1} \sum_{k=1}^{m+1} \frac{1}{\binom{r}{2}}\sum_{1 \leq j_1 < j_2 \leq r}\tilde{f}_k(X_{k,j_1})\tilde{f}_k(X_{k,j_2})}{\dt}\\
    &= \alpha^2 \cdot \EEst{\tilde{f}(X_{m+1,1})\tilde{f}(X_{m+1,2})}{\dt}.
\end{align*}
Applying the steps in~\eqref{eqn:pf_1}, we have $\EE{\ta_\D(f)^2} \leq \alpha^2$.

\end{document}